\newtheorem{proposition}{Proposition} 
\newtheorem{theorem}{Theorem}
\newtheorem{corollary}{Corollary}
\newtheorem{remark}{Remark}
\newtheorem{lemma}{Lemma}
\def\Ker{\mathop{\rm Ker}}
\def\QED{\mbox{\rule[0pt]{1.5ex}{1.5ex}}}
\renewcommand{\qed}{\hfill \QED}
 \newenvironment{proofof}[1]{\vspace*{5mm} \par \noindent
         \quad{\it Proof of #1:\hspace{2mm}}}{\qed
%\hfill$\Box$ \vspace*{3mm}
}
\def\FF{\mathbb{F}}
\def\rank{\mathop{\rm rank}}
\def\im{\mathop{\rm Im}}
\def\rank{\mathop{\rm rank}}
\def\Label#1{\label{#1}\ [\ \text{#1}\ ]\ }
\def\Label{\label}
\begin{document}
%\linesparpage{25}

\title{Asymptotically Secure Network Code for Active Attacks 
and its Application to Network Quantum Key Distribution
%with Secrecy and Robustness under 
}

\author{Masahito Hayashi \IEEEmembership{Fellow, IEEE}
%Masaki Owari, Go Kato, 
and Ning Cai \IEEEmembership{Fellow, IEEE}
\thanks{The work of M. Hayashi was supported in part by
the Japan Society of the Promotion of Science (JSPS) Grant-in-Aid 
for Scientific Research (B) Grant 16KT0017 and
for Scientific Research (A) Grant 17H01280 and
for Scientific Research (C) Grant 16K00014,
in part by the Okawa Research Grant, and 
in part by the Kayamori Foundation of Informational Science Advancement.
The material in this paper was presented in part at the 2017 IEEE International Symposium on Information Theory (ISIT 2017),   Aachen (Germany), 25-30 June 2017 \cite{HOKC}.}
\thanks{Masahito Hayashi is with the Graduate School of Mathematics, Nagoya University, Nagoya, 464-8602, Japan. 
He is also with 
Shenzhen Institute for Quantum Science and Engineering, Southern University of Science and Technology,
Shenzhen, 518055, China,
Center for Quantum Computing, Peng Cheng Laboratory, Shenzhen 518000, China,
and the Centre for Quantum Technologies, National University of Singapore, 3 Science Drive 2, 117542, Singapore
(e-mail:masahito@math.nagoya-u.ac.jp).
%Masaki Owari is with Department of Computer Science, Faculty of Informatics, Shizuoka University, Japan 
%(e-mail:masakiowari@inf.shizuoka.ac.jp).
%Go Kato is with NTT Communication Science Laboratories, NTT Corporation, Japan
%(e-mail:kato.go@lab.ntt.co.jp).
Ning Cai is with the School of Information Science and Technology, ShanghaiTech University, Middle Huaxia Road no 393,
Pudong, Shanghai  201210, China
(e-mail: ningcai@shanghaitech.edu.cn).} }

\markboth{M. Hayashi %, M. Owari, G. Kato, 
and N. Cai: 
Asymptotically Secure Network Code for Active Attacks 
%and its Application to Network Quantum Key Distribution
}{}

\maketitle

\begin{abstract}
When there exists a malicious attacker in the network, we need to be careful of eavesdropping and
contamination. This problem is crucial for network communication when the network is realized by
a partially trusted relay of quantum key distribution. We discuss the asymptotic rate in a linear network
with the secrecy and robustness conditions when the above type of attacker exists. Also, under the
same setting, we discuss the asymptotic rate in a linear network when we impose the secrecy condition
alone. Then, we apply these results to the network composed of a partially trusted relay of quantum key
distribution, which enables us to realize secure long-distance communication via short-distance quantum
key distribution.
\end{abstract}

\begin{IEEEkeywords} 
%secrecy analysis,
secure network coding,
sequential injection,
%passive attack,
active attack,
universal code,
asymptotic rate
\end{IEEEkeywords}

\section{Introduction}
Secure information transmission over a network is an important topic
because there is a risk that a part of the network is broken, contaminated, and/or 
eavesdropped in a large network system.
Cai and Yeung \cite{Cai2002} started the study of 
secure network coding, which offers a method securely transmitting information from the authorized sender to the authorized receiver.
The demand for such a type of secure communication is increasing beyond the areas of information theory and communication theory. 
For example, to realize rigorous security, quantum key distribution is actively studied \cite{BB84}.
Its commercial use has been well developed for limited transmission distance \cite{IDQ}.
However, it is very difficult to directly connect two distinct parties over long distances via quantum key distribution.
To realize long-distance communication with quantum key distribution over short distances, 
they often consider using quantum repeater \cite{BDCZ,DLCZ,SSRG}.
However, it is also difficult to realize quantum repeater.
Hence, it is natural to establish a network whose edges are realized by 
secure communication by one-time-pad use of keys generated  by 
quantum key distribution over short distances.
That is, we generate many pairs of shared secure keys on intermediate nodes by quantum key distribution, where each pair is composed of two nodes 
connected by an edge.
The secure keys shared by two nodes realize a secure channel between the two nodes.
Hence, the application of the network coding to the network composed of these secure channels
yields a scheme to realize secure communication between two distinct parties across a long distance.
However, there is a possibility that a part of the nodes are occupied by Eve.
Hence, we need to use secure network coding instead of network coding.
To resolve this problem,
the pioneering papers \cite{LWLZ,WHZ} studied this type of information transmission for the case of partially trusted routing networks. 
This paper addresses a general network including routing networks, 
in which a part of the nodes are attacked and 
the attacked nodes cannot be identified by the sender and the receiver like the Byzantine setting.
The network topology is not known to the sender and the receiver.

In the above type of network, there are two types of attacks.
In the first type of attack, the malicious adversary, Eve, wiretaps  a subset $E_E$ of all the channels in a network,
which was studied by the first paper by Cai and Yeung \cite{Cai2002}.
Using the universal hashing lemma \cite{bennett95privacy,HILL,hayashi11}, 
the papers \cite{Matsumoto2011,Matsumoto2011a} showed the existence of a secrecy code that works universally for any type of eavesdropper 
when the cardinality of $E_E$ is bounded.
In particular, the paper \cite{KMU1,KMU2,KMU} discussed a concrete construction of such a universally secure code, which is more practical.
As another type of attack on information transmission via a network,
a malicious adversary contaminates the communication by 
changing the information on a subset $E_A$ of all the channels in the network.
Using an error correction, the papers \cite{Cai06a,Cai06,HLKMEK,JLHE} proposed a method to protect the message from contamination.
That is, we require that the authorized receiver correctly recovers the message, which is called robustness.
Now, for simplicity, we consider the unicast setting.
When the transmission rate from the authorized sender, Alice, to the authorized receiver, Bob, is $m_{0}$
and the rate of noise injected by Eve is $m_{1}$,
using the results published in \cite{JLKHKM,Jaggi2008},
the study \cite{JL} showed that
there exists a sequence of asymptotically correctable codes with the rate
$m_{0}-m_{1}$ if the rate of information leakage to Eve is less than $m_{0}-m_{1}$.

However, there is a possibility that the malicious adversary combines eavesdropping and contamination.
That is, contaminating a part of the channels, the malicious adversary might improve the ability of eavesdropping
while a parallel network offers no such a possibility \cite{KSZBJ,ZKBJS1,ZKBJS2}.
In fact, in arbitrarily varying channel model,
noise injection is allowed after Eve's eavesdropping, but  
Eve does not eavesdrop the channel after Eve's noise injection \cite{BBT,Ahlswede,CN88,TJBK}\cite[Table I]{KJL}.
The studies \cite{KMU1,KMU2,KMU,Zhang} discussed the secrecy
when Eve eavesdrops the information transmitted on the channels in $E_E$ after noises are injected in $E_A$,
but they assume that Eve does not know the information of the injected noise. 

In contrast, this paper discusses the secrecy when Eve 
adds artificial information to the information transmitted on the channels in $E_A$, 
eavesdrops the information transmitted on the channels in $E_E$, and 
estimates the original message from the eavesdropped information and the information of the injected noises.
We call this type of attack an active attack 
and call an attack without contamination a passive attack.
Specially, we call each of Eve's active operations a strategy.
Indeed, while the paper \cite{Yao2014} discusses robustness for an active attack,
it discusses secrecy only for a passive attack. 
When $E_A=E_E$ and any active attack is available for Eve, she is allowed to arbitrarily modify the information on the channels in $E_A$ sequentially based on the obtained information.
Fortunately, the previous paper \cite{HOKC1}
showed that an active attack has the same performance as the passive attack in the case of linear codes.

The aim of this paper is as follows.
First, we give a formulation of our setting with
a general set $E_E$ of eavesdropping nodes
and a general set $E_A$ of noise-injection nodes.
Then, we state that no strategy can improve Eve's information
when every operation in the network is linear,
which is a brief review of the result of \cite{HOKC1}.
Next, we discuss a code that satisfies the need for  secrecy and robustness
when the transmission rate from Alice to Bob is $m_{0}$,
the rate of noise injected by Eve is $m_{1}$,
and the rate of information leakage to Eve is $m_{2}$.
In the asymptotic setting, given linear operations on the intermediate nodes with a certain order,
our protocol controls only the encoder in the sender and the decoder in the receiver.
Since we address the static Byzantine setting,
the sender and the receiver do not know the two sets $E_E$ and $E_A$, 
i.e., what edges are attacked,
while they know the channel parameter $m_0$ and the upper bounds of $m_1$ and $m_2$.
Although intermediate nodes make linear operation over a single transmission,
the sender and the decoder are allowed to make coding operations across several transmissions.
Such a coding is called a non-local code to distinguish operations over a single transmission.
We show the existence of such a secure protocol with rate $m_{0}-m_{1}-m_{2}$.
Since our non-local code depends only on the number $m_{0}$, 
the upper bounds of $m_{1}$ and $m_{2}$,
and the dimensions of the input and the output,
it works universally.
Also, we discuss the asymptotic performance when only secrecy is considered.
%which is a useful setting when public communication between Alice and Bob is allowed.
When Alice and Bob share a small number of initial secret keys 
and can communicate with each other via a public channel, 
we do not need to impose robustness, but need the correctness only for passive attack
because we can make error verification \cite[Section VIII]{Fung}.
In such a case, we show the existence of a secure protocol with the rate $m_{0}-m_{2}$.

Then, we apply these two types of asymptotic analyses to network quantum key distribution. % whose detail is explained in the first paragraph.
Since public communication is allowed in quantum key distribution, 
the latter setting is more useful for the security analysis in
network quantum key distribution.
%communication over a network of quantum key distribution systems
Further, although the above discussion addresses the unicast setting, 
we explain how to extend this setting to the multiple multicast case including the multiple unicast case.
%as explained in Section \ref{S10}.

%\subsection{Organization}
The remaining part of this paper is organized as follows.
Section \ref{S2} formulates our problem and shows the 
impossibility of Eve's eavesdropping under a linear network,
which is a brief review of the result by \cite{HOKC1}.
Section \ref{S3} discusses the asymptotic setting,
and show the achievability of the asymptotic rate $m_{0}-m_{1}-m_{2}$.
Section \ref{S5} discusses the asymptotic setting with secrecy without robustness.
Using the result of Section \ref{S5},
Section \ref{S6} considers the application of obtained results to 
network quantum key distribution
Section \ref{S10} applies the obtained result to multiple multicast networks.
In Section \ref{SCon}, we state the conclusion.

%Section \ref{S4} gives a counterexample in the non-linear case.
%In Section \ref{S2-5}, we show that our network code is limited to this counterexample when we impose natural conditions on our network code.
%In Section \ref{S3-5}, we present a code in the non-binary case that prevents Eve from completely recovering the message even with this type of active attack.
%Section \ref{S3-6} shows the optimality of this code in a specific formulation.

%\section{Single shot}\label{S2}
\section{Secrecy in finite-length setting}\Label{S2}
%\subsection{Secrecy}
Although the paper \cite{HOKC1} discusses 
the formulation of a channel model of network 
with eavesdropping and contamination more rigorously,
this section briefly explains this model in the case of an acyclic network with well synchronization.
We consider the unicast setting.
Assume that the authorized sender, Alice, and the authorized receiver, Bob,
are linked via an acyclic network with the set of edges $E$, where the operations on all nodes are linear on the finite field $\FF_q$ with prime power $q$.
Alice inputs the input variable ${\bf X}$\footnote{In this paper, we denote the vector on $\FF_q$ by a bold letter.
But, we use a non-bold letter to describe a scalar and a matrix.}
 in $\FF_q^{m_{{3}}}$ and Bob receives the output variable ${\bf Y}_B$ in $\FF_q^{m_{{4}}}$.
We also assume that the malicious adversary, Eve, wiretaps the information ${\bf Y}_E$ in 
$\FF_q^{m_{6}}$
on the edges of a subset $E_E\subset E$.
Now, we fix the topology and dynamics (operations on the intermediate nodes) of the network.
When we assume all operations on the intermediate nodes are linear,
there exist matrices $K_B \in \FF_q^{m_{{4}}\times m_{{3}} }$ 
%with rank $m_{0}$
and $K_E \in \FF_q^{m_{6}\times m_{{3}} }$
%with rank $m_{2}$
such that the variables ${\bf X}$, ${\bf Y}_B$, and ${\bf Y}_E$ satisfy their relations
\begin{align}
{\bf Y}_B=K_B {\bf X}, \quad
{\bf Y}_E=K_E {\bf X}.\Label{E2}
\end{align}
That is, the matrices $K_B$ and $K_E$ are decided from the network topology and dynamics.
We call this attack the {\it passive attack}.

%We simply denote it by $I$ under the above setting.
%As another measure of the leaked information, we employ the variational distance ($\ell_1$-norm) between the real distribution $P_{M,Y_E, H_B } $ and the ideal distribution $P_{M} \times P_{Y_E, H_B } $ as $d_1(M|Y_E, H_B |P_{M,Y_E, H_B }):=\|P_{M,Y_E, H_B } -P_{M} \times P_{Y_E, H_B } \|_1$, which is simplified to $d_1$.

To address the active attack, we consider stronger Eve, i.e., we assume that Eve adds an error ${\bf Z}$ in $\FF_q^{m_{1}}$  on the edges of a subset $E_A\subset E$.
Using matrices $H_B \in \FF_q^{m_{{4}}\times m_{{1}} }$ %with rank $m_1$
and $H_E \in \FF_q^{m_{6}\times m_{1}}$, 
we rewrite the above relations as
\begin{align}
{\bf Y}_B=K_B {\bf X}+ H_B {\bf Z}, \quad
{\bf Y}_E=K_E {\bf X}+ H_E {\bf Z}, \Label{E4}
\end{align}
which is called the {\it wiretap and addition model}.
We set the parameter $m_0$ as %In the following, we assume that 
\begin{align}
\rank K_B= m_{0} ,
%\rank H_B= m_{1}, ~ \rank K_E= m_{2} 
\Label{1-6}
\end{align}
%as well as \eqref{F14}.
and assume the ranks of $H_B$ and $K_E$ as
\begin{align}
\rank H_B= m_{1}, ~ \rank K_E= m_{2} .
\Label{1-6X}
\end{align}
Hence, the channel parameters are summarized as Table \ref{hikaku}.

\begin{table}[htpb]
  \caption{Channel parameters}
\Label{hikaku}
\begin{center}
  \begin{tabular}{|c|l|} 
\hline
\multirow{2}{*}{$m_0$} & Rank of the channel from Alice\\
&  to Bob, i.e., $\rank K_B$ \\
\hline
$m_1$ & Rank of Eve's injected information ${\bf Z}$ ($\rank H_B$)\\
\hline
$m_2$ & Rank of Eve's wiretapped information ${\bf Y}_E$ ($\rank K_E$)\\
\hline
$m_3$ & Dimension of Alice's input information ${\bf X}$ \\
\hline
$m_4$ & Dimension of Bob's observed information ${\bf Y}_B$ \\
\hline
$m_5$ & Dimension of Eve's injected information ${\bf Z}$\\
\hline
$m_6$ & Dimension of Eve's wiretapped information ${\bf Y}_E$\\
\hline
  \end{tabular}
\end{center}
%\par\vspace{3ex}\par
\end{table}

Now, to consider the time ordering among the edges in $E$,
we assign the integers to the edges in $E$ such that $E=\{e(1), \ldots, e(k)\}$.
We assume that the information transmission on each edge is done with this order.
In this representation, the elements of $Z$ and $Y_E$ are arranged in this order.
Hence, the elements of the subsets $E_E$ and $E_A$ are expressed as
$E_E=\{e(\zeta(1)), \ldots, e(\zeta(m_{2}))\}$ and
$E_A=\{e(\eta(1)), \ldots, e(\eta(m_{1}))\}$ 
by using two strictly increasing functions $\zeta$ and $\eta$. 
The causality yields that
\begin{align}
H_{E;j,i}=0 \hbox{ when } \eta(i) \ge \zeta(j). \Label{E14}
\end{align}
It is natural that Eve can choose the information to be added in the edge $e(i) \in E_A$
based on the information obtained previously on the edges in the subset 
$\{e(\zeta(j)) \in E_E|\zeta( j) \le \eta(i)\}$.
That is, the added error $Z$ is given as a function $ \alpha$ of $ Y_E$,
which can be regarded as Eve's strategy. 
We call this attack the {\it active attack} with the strategy $\alpha$.

Now, we consider the $n$-transmission setting, in which, Alice uses the same network $n$ times to send the message to Bob.
Alice's input variable (Eve's added variable) is given as 
a matrix $X^n=({\bf X}_1, \ldots, {\bf X}_n) \in\FF_q^{m_{{3}} \times n}$ (a matrix $Z^n=({\bf Z}_1, \ldots, {\bf Z}_n) \in\FF_q^{m_{1} \times n}$),
and Bob's (Eve's) received variable is given as 
a matrix $Y_B^n\in\FF_q^{m_{{4}} \times n}$ (a matrix $Y_E^n\in\FF_q^{m_{2} \times n}$).
We assume that the topology and dynamics of the network 
and the edge attacked by Eve
are not changed during $n$ transmissions.
In particular, the operations on intermediate nodes are not changed.
Then, their relation is given as
\begin{align}
Y_B^n&=K_B X^n+ H_B Z^n, \Label{F1n} \\
Y_E^n&=K_E X^n+ H_E Z^n. \Label{F2n}
\end{align}
Notice that the relations \eqref{F1n} and \eqref{F2n} with $H_E=0$
(only the relation \eqref{F1n}) were treated as the starting point of the paper
\cite{Yao2014} (the papers \cite{JLKHKM,Jaggi2008,JL}).

To discuss the secrecy and the robustness, we formulate a code.
Let ${\cal M}$ and ${\cal L}$ be the message set and the set of values of the scramble random number. 
Then, an encoder is given as a function $\phi_n$ from ${\cal M} \times {\cal L} $
to $\FF_q^{m_{{3}} \times n}$, and the decoder is given as $\psi_n$ from $\FF_q^{m_{{4}} \times n}$ to ${\cal M}$.
Our code is the pair $(\phi_n,\psi_n)$, and is denoted by $\Phi_n$.
Since the code pair of $\phi_n$ and $\psi_n$ is across $n$ transmission,
it is called a non-local code to distinguish operations over a single transmission.
Then, we denote the message and  the scramble random number by $M$ and $L$. 
The cardinality of ${\cal M}$ is called the size of the code and is denoted by $|\Phi_n|$.

Here, we treat $K_B,K_E,H_B,H_E$ as deterministic values, and denote the pairs $(K_B,K_E)$ and $(H_B,H_E)$ by $\bm{K}$ and $\bm{H}$, respectively.
In the following, we fix $\Phi_n,\bm{K},\bm{H},\alpha$.
As a measure of the leaked information, we adopt
the mutual information $I(M; Y_E^n,Z^n)$ between $M$ and Eve's information $Y_E^n$ and $Z^n$
Since the variable $Z^n$ is given as a function of $Y_E^n$,
we have $I(M; Y_E^n,Z^n)=I(M; Y_E^n)$.
Since the leaked information is given as a function of 
$\Phi_n,\bm{K},\bm{H},\alpha$ in this situation, 
we denote it by $I(M;Y_E^n)[\Phi_n,\bm{K},\bm{H},\alpha]$.
When we always choose $Z^n=0$, the attack is the same as the passive attack.
This strategy is denoted by $0$.
%When $\bm{K},\bm{H}$ are treated as random variables independent of $M,L$,
%the leaked information is given as the expectation of $I(M;Y_E^n)[\Phi_n,\bm{K},\bm{H},\alpha]$.
%This probabilistic setting expresses the situation that
%Eve cannot necessarily choose her position to attack by herself while she knows the position,
and chooses her strategy dependently of the position.

Now, we have the following theorem \cite{HOKC1}.
\begin{theorem}\Label{T1}
For any Eve's strategy $\alpha$, 
Eve's information $Y_E^n$ with strategy $\alpha$
and that with strategy $0$ can be simulated by each other.
Hence, we have the equation
\begin{align}
I(M;Y_E^n)[\Phi_n,\bm{K},\bm{H},0]
=
I(M;Y_E^n)[\Phi_n,\bm{K},\bm{H},\alpha].
\end{align}
\end{theorem}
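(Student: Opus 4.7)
The plan is to establish the theorem by showing that, for each realization of $(X^n, L)$, Eve's observation $Y_E^n$ under strategy $\alpha$ and her observation under the trivial strategy $0$ are deterministic functions of one another. Once this mutual simulation is in place, two applications of the data processing inequality immediately give the claimed equality of mutual informations, since $M \to Y_E^n[0] \to Y_E^n[\alpha]$ and $M \to Y_E^n[\alpha] \to Y_E^n[0]$ become Markov chains.

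For the forward direction (simulating $\alpha$ from $0$), I would fix any realization of $X^n$ and process Eve's observations in the order induced by $\zeta$. Under strategy $0$, Eve sees the $j$-th row of $K_E X^n$; under strategy $\alpha$, she sees the $j$-th row of $K_E X^n + H_E Z^n$, where $Z^n$ is produced by $\alpha$ from her running observation. The key point is the causality constraint \eqref{E14}: the $j$-th row of $H_E Z^n$ depends only on those ${\bf Z}$-entries injected on edges $e(\eta(i))$ with $\eta(i) < \zeta(j)$, and by the definition of Eve's strategy each such entry is itself a function of the observations on the edges $\{e(\zeta(j'))\,:\,\zeta(j') \le \eta(i)\}$, hence of Eve's observations indexed by $j' < j$. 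Thus, given the strategy-$0$ view $K_E X^n$, one can reconstruct Eve's strategy-$\alpha$ view row by row via a deterministic recursion: at step $j$, compute the pending $Z$-entries from the already-simulated observations, add the appropriate row of $H_E Z^n$, and proceed.

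For the reverse direction (simulating $0$ from $\alpha$), the same causal recursion runs in the opposite sense: the strategy-$\alpha$ observations $Y_E^n[\alpha]$ determine, by applying the known function $\alpha$ step by step, all the injected values $Z^n$ that would have been produced; subtracting $H_E Z^n$ from $Y_E^n[\alpha]$ yields $K_E X^n = Y_E^n[0]$. Because $\alpha$ is a fixed, known strategy of Eve, this subtraction is performed deterministically from $Y_E^n[\alpha]$ alone, without any side information about $X^n$ or $M$.

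The main obstacle, and really the only substantive point, is a careful bookkeeping of the induction order and the verification that at step $j$ every quantity needed on the right-hand side has already been computed at step $j-1$. This is exactly what the causality condition \eqref{E14} guarantees: the triangular structure of $H_E$ with respect to the interleaving $\zeta,\eta$ rules out any circular dependence between observation and injection on the same or earlier edge index. Once this is laid out, the equality $I(M;Y_E^n)[\Phi_n,\bm{K},\bm{H},0] = I(M;Y_E^n)[\Phi_n,\bm{K},\bm{H},\alpha]$ follows with no further computation.
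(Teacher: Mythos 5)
Your proposal is correct and follows essentially the same route as the paper's own proof in Appendix~\ref{ApB}: a mutual simulation between $Y_{E,p}^n=K_E X^n$ and $Y_{E,a}^n=K_E X^n+H_E Z^n$ (yours just spells out the causal recursion from \eqref{E14} that the paper leaves implicit), followed by data processing in both directions and the observation that $Z^n$ is a function of $Y_E^n$. No substantive difference.
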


While the paper \cite{HOKC1} showed this theorem in a more general setting
with more rigorous description of the problem setting,
we briefly give its proof in Appendix \ref{ApB}.
This theorem shows that the information leakage of the active attack with the strategy $\alpha$ is the same as 
the information leakage of the passive attack.
Hence, to guarantee the secrecy under an arbitrary active attack,
it is sufficient to show the secrecy under the passive attack.

\begin{remark}[Number of choices]
To compare passive and active attacks,
we count the number of choices of both attacks.
While the passive attack is characterized by the matrix $K_E$,
the information leaked to Eve in the passive attack depends only on the kernel of the matrix $K_E$.
To characterize the information leaked to Eve,
we consider two matrices to be equivalent when their kernels are the same.
In a passive attack, when we fix the rank of $K_E$ (the dimension of leaked information),
by taking into account the equivalent class,
the number of possible choices is upper bounded by
$q^{m_{2}(m_{{3}}-m_{2})} $. 
With an active attack, this calculation is more complicated.
For simplicity, we consider the case with $n=1$.
To consider the minimum number of choices of $\alpha^n$,
we assume condition \eqref{E14}.
(When we do not make this assumption, the number of choices is larger.)
We do not count the choice for the inputs on the edge $\eta(i)$ with $\eta(i) \ge \zeta(m_{6})$ because it does not affect Eve's information.
Then, even when we fix the matrices $K_E,H_E$,
the number of choices of $\alpha^n$ is 
\begin{align}
q^{\sum_{i: \eta(i) < \zeta(m_{6})} q^{T_i}}, \Label{F17}
\end{align}
where $T_i:= \max \{j | \eta(i) \ge \zeta(j)\}$.
Notice that $T_i=i$ when $E_A=E_E$.
If we count the choice on the remaining edges, we need to multiply $q^{\sum_{i: \eta(i) \ge \zeta(m_{6})} q^{T_i}}$ on \eqref{F17}.
For a generic natural number $n$,
the number of choices of $\alpha^n$ is
\begin{align}
 q^{n\sum_{i: \eta(i) < \zeta(m_{6})} q^{n T_i}} \Label{F18}.
\end{align}
\end{remark}

\section{Asymptotic setting with secrecy and robustness}\Label{S3}
In this section, under the model given in Section \ref{S2},
we consider the asymptotic setting by taking account of robustness as well as secrecy
while Eve's strategy $\alpha$ is assumed to satisfy the uniqueness condition.
We previously assumed that 
the matrices $K_B$, $K_E$, $H_B$, and $H_E$, i.e., 
the topology and dynamics of the network and the edge attacked by Eve do not change during $n$ transmissions.
Now, we assume that Eve knows these matrices and 
that Alice and Bob know none of them because Alice and Bob often do not know the topology and/nor dynamics of the network and/nor the places of the edges attacked by Eve.
However, due to the limitation of Eve's ability,
we assume that the dimension of the information leaked to Eve and 
the rank of the information injected by Eve are limited to $m_2 $ and $m_1$, respectively.
Indeed, when the original network is given by the graph $(V,E)$ and
Eve eavesdrops at most $m_6'$ edges and injects the noise at most $m_5'$ edges,
we have $m_2 \le m_6'$ and $m_1\le m_5'$.
This evaluation is still valid even in the wiretap and replacement model.
Therefore, it is natural to assume the upper bounds of these dimensions.
(See Remark \ref{R-Dim}.)

When Eve adds the error $Z^n$, there is a possibility that Bob cannot recover the original information $M$.
This problem is called {\it robustness}, and may be regarded as a kind of error correction.
Under the conventional error correction, 
the error $Z^n$ is treated as a random variable subject to a certain distribution.
However, our problem %cannot be treated in the same 
is different from the conventional error correction 
because the decoding error probability depends on the strategy $\alpha^n$.
Hence, we denote it by $P_e[\Phi_n,\bm{K},\bm{H},\alpha^n]$.
%, i.e., we do not express $K_B$ in this notation because $K_B$ is known to all players.
Then, the following proposition is known.

\begin{proposition}[\protect{\cite{JLKHKM,Jaggi2008,JL,Yao2014}}]\Label{T2}
Assume that %Let $m_{1}$ and $m_{2}$ be integers satisfying 
$m_{2}+m_{1}< m_{0}$.
There exists a sequence of non-local codes $\Phi_{n}$ of block-length $l_n$ on a finite field $\FF_q$
%with input ${X}^{n}$ and output ${Y_B}^{l_n}$
whose message set is $\FF_{q}^{k_{n}}$ such that
\begin{align}
&\lim_{n \to \infty} \frac{k_n}{l_n} = m_{0}-m_{1}\\
&\lim_{n \to \infty} \max_{ \bm{K},\bm{H}} \max_{\alpha^n} 
P_e[\Phi_n, \bm{K},\bm{H},\alpha^n]=0,
\end{align}
where the maximum is taken with respect to
$(K_B,H_B,K_E,H_E) \in 
\FF_q^{m_{{4}} \times m_{{3}}}\times 
\FF_q^{m_{{4}} \times m_{5}}\times 
\FF_q^{m_{6} \times m_{{3}}}\times 
\FF_q^{m_{6} \times m_{5}}$ with conditions \eqref{1-6} and 
\eqref{1-6X}.
Here, there is no restriction for the choice of $m_5$ and $m_6$.
\end{proposition}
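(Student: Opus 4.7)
The plan is to adapt the random-coding, list-decoding, and hash-verification framework developed by Jaggi \emph{et al.}~\cite{JLKHKM,Jaggi2008,JL} and refined for the sequential-injection setting by Yao \emph{et al.}~\cite{Yao2014}. Since $\rank K_B = m_0$ and $\rank H_B = m_1$, Bob's observation $Y_B^n = K_B X^n + H_B Z^n$ is a rank-metric perturbation of $K_B X^n$ by at most $m_1$ additive dimensions per transmission, so the channel capacity against adversarial rank-$m_1$ noise is at most $m_0 - m_1$. The strict inequality $m_1 + m_2 < m_0$ is what will allow Alice to reserve enough ``invisible-to-Eve'' rate to authenticate her message against Eve's rank-$m_1$ forgery attempts.

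First, I would fix a small $\varepsilon > 0$ and split Alice's non-local codeword into a payload of $(m_0 - m_1 - \varepsilon) n$ symbols carrying the message $M$ and an authentication tag of $O(\log n)$ symbols carrying a universal hash of $M$ indexed by the scramble key $L$. This composite is embedded into $X^n \in \FF_q^{m_3 \times n}$ via a uniformly random full-rank linear rank-metric encoder (using $m_3 \ge m_0$ by \eqref{1-6}). Bob's decoder first produces the list of all payload candidates $\hat M$ for which $Y_B^n - K_B \phi_n(\hat M, \hat L)$ lies in the column span of some matrix of rank at most $m_1$, then retains the unique list element whose recovered tag is consistent.

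Next, I would analyse the error probability with $(\bm{K}, \bm{H}, \alpha^n)$ held fixed. A dimension-counting argument using the rank-$m_1$ bound on $H_B Z^n$ shows that, with high probability over the code ensemble, this list has only polynomial size in $q^n$. A spurious list element survives the hash check only with probability $q^{-\Omega(n)}$, because under $m_2 + m_1 < m_0$ the conditional entropy of $L$ given Eve's view $(Y_E^n, Z^n)$ grows linearly in $n$, and a leftover-hash-type argument applied to the universality of the hash family controls the collision rate. A union bound over the polynomial list then drives $P_e[\Phi_n, \bm{K}, \bm{H}, \alpha^n] \to 0$.

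The main obstacle is uniformity over all $(\bm{K}, \bm{H}, \alpha^n)$, since the strategy set alone has the doubly-exponential cardinality \eqref{F18}. I would bypass this by an omniscient-adversary reduction: the decoder's error event depends on $\alpha^n$ only through the effective additive noise $H_B Z^n$ (a rank-$m_1$ matrix over $\FF_q^n$) and through the rank-$m_2$ leakage $K_E X^n + H_E Z^n$, so the effective number of distinguishable attacks is only $q^{O(n)}$. The per-attack failure probability $q^{-\Omega(n)}$ therefore absorbs this union bound, and a final averaging over the random code ensemble extracts a single deterministic $\Phi_n$ that achieves the claimed uniform convergence.
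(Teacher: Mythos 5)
Your high-level plan (list decoding plus secret hash verification) follows the route of the cited works rather than the paper's own argument, but as written it has three concrete gaps. First, the field size: the proposition fixes $\FF_q$ and lets the block-length $l_n$ grow, and the paper handles this by first proving the result over an extension field $\FF_{q'}$ with $q'/n^{m_0+1}\to\infty$ (Proposition \ref{T2B}) and then setting $l_n=t_n n$ with $t_n=\lceil (m_0+1)\log n/\log q\rceil$. You work over fixed $q$ and claim a spurious list element survives the hash check with probability $q^{-\Omega(n)}$ while the tag has only $O(\log n)$ symbols; over a fixed alphabet such a tag can give a collision probability of at best $q^{-O(\log n)}=n^{-O(1)}$, which cannot absorb your own union bound over the $q^{O(n)}$ ``effective'' attacks. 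Second, the authentication key and tag must reach Bob both \emph{secretly and correctly}; you embed them in the same adversarial transmission, so Eve's rank-$m_1$ injection can corrupt them and her observation can correlate her injection with them, destroying the independence on which the hash-collision bound rests. The paper resolves this by a two-stage bootstrap: Proposition \ref{T2C} assumes a negligible-rate secret side channel, and Proposition \ref{T2BL} (cited from \cite{JL,Yao2014}, and holding under exactly $m_1+m_2<m_0$) realizes that channel. This bootstrap is the actual content of weakening the condition from $m_2+2m_1<m_0$ to $m_2+m_1<m_0$ and cannot be treated as routine. Third, your decoder tests whether $Y_B^n-K_B\phi_n(\hat M,\hat L)$ lies in a rank-$m_1$ column span, but Bob does not know $K_B$ in this universal setting.

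It is also worth noting that the paper's proof of Proposition \ref{T2C} deliberately avoids list decoding: Alice sends over the secret channel a Vandermonde seed $V_1,\dots,V_m$ with $m=m_0+1$ together with the tag $U_2=MU_1$, and Bob recovers $M$ by solving the single linear system $U_3\bar{Y}_B^{n}U_1=U_2$ by Gaussian elimination; correctness reduces to two generic-position events whose failure probabilities are $O(n^{m_0+1}/q')$ uniformly in $(\bm{K},\bm{H},\alpha^n)$ via Lemmas \ref{L4-27} and \ref{LJa}, and the secrecy of $(V_j,U_0)$ from Eve is what makes these bounds attack-independent. This yields a smaller hash dimension ($m_0+1$ versus $(m_0-m_1)m_0+1$ in the list-decoding approach) and sidesteps the list-size analysis entirely.
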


The existing proof of Proposition \ref{T2}
is given as a combination of several results.
Each part of the existing proof is hard to read because it omits the detail derivation. 
Hence, for readers' convenience, 
we give its alternative proof in Appendix \ref{Ap1}, which has an improvement over the existing proof. 
Combining Theorem \ref{T1} and Proposition \ref{T2}, we obtain the following theorem.
\begin{theorem}\Label{T3}
We assume that
$m_{2}+m_{1}< m_{0}$.
%(2) Alice and Bob share secure random numbers whose length is asymptotically negligible in block-length $n$. 
There exists a sequence of non-local codes $\Phi_{n}$ 
of block-length $l_n$ on finite field $\FF_q$
%of block-length $l_n$ with input ${X}^{l_n}$ and output ${Y_B}^{l_n}$
whose message set is $\FF_q^{k_n}$ such that
\begin{align}
&\lim_{n \to \infty} \frac{k_n}{l_n} = m_{0}-m_{1}-m_{2}\\
&\lim_{n \to \infty} \max_{ \bm{K},\bm{H}} \max_{\alpha^n} 
P_e[\Phi_n,\bm{K},\bm{H},\alpha^n]=0 \Label{H3-181} \\
&%\lim_{n \to \infty} 
\max_{ \bm{K},\bm{H}} \max_{\alpha^n} 
I(M;Y_E^n)[\Phi_n,\bm{K},\bm{H},\alpha^n]=0, \Label{H3-182}
\end{align}
where the maximum is taken in the same way as with Proposition \ref{T2}.
\end{theorem}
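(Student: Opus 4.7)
The plan is to overlay a linear, universal-secrecy preprocessor on top of the reliable code supplied by Proposition \ref{T2}, and then invoke Theorem \ref{T1} to promote the resulting passive-attack secrecy to secrecy against every active strategy. This is the standard two-layer wiretap-plus-error-correction construction, adapted to the universal setting considered here.

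First, I would apply Proposition \ref{T2} to obtain a sequence of \emph{linear} non-local codes $\tilde\Phi_n = (\tilde\phi_n, \tilde\psi_n)$ with message set $\FF_q^{\tilde k_n}$, rate $\tilde k_n / l_n \to m_0 - m_1$, and decoding error vanishing uniformly over $\bm K$, $\bm H$, and active strategies $\alpha^n$. The linearity of $\tilde\phi_n$ (which the cited constructions provide) is crucial for the later use of Theorem \ref{T1}. Set $k_n = \tilde k_n - \lceil m_2 l_n \rceil$ so that $k_n / l_n \to m_0 - m_1 - m_2$. Then take a suitable invertible linear map $f_n$ on $\FF_q^{\tilde k_n}$ and define $\phi_n(M,L) = \tilde\phi_n(f_n(M,L))$, with scramble $L$ uniform on $\FF_q^{\tilde k_n - k_n}$, and $\psi_n(y) = \pi_{\cal M}(f_n^{-1}(\tilde\psi_n(y)))$. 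Because $f_n$ is a bijection, identifying $\tilde M = f_n(M,L)$ reduces the decoding problem for $\Phi_n$ to the one solved by $\tilde\Phi_n$, so the decoding error of $\Phi_n$ coincides with that of $\tilde\Phi_n$ and \eqref{H3-181} follows immediately.

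To establish \eqref{H3-182}, observe that $\phi_n$ is linear, so Theorem \ref{T1} gives $I(M; Y_E^n)[\Phi_n, \bm K, \bm H, \alpha^n] = I(M; Y_E^n)[\Phi_n, \bm K, \bm H, 0]$ for every $\alpha^n$. Under the passive attack, $Y_E^n = K_E \tilde\phi_n(f_n(M,L))$ is a linear function of $(M, L)$ whose rank is bounded by $m_2 l_n$. The remaining task is to choose $f_n$ so that, for every admissible $K_E$, the image of the $M$-columns of the composite linear map is contained in the image of the $L$-columns; then the scramble $L$ perfectly masks $M$ and the passive-attack leakage is identically zero. This is precisely the universal wiretap-II secrecy condition, solvable at any rate strictly below $(m_0 - m_1) - m_2$; one can either cite an explicit maximum-rank-distance / Gabidulin-based construction (in the spirit of Silva-Kschischang or Matsumoto-Hayashi) or argue probabilistically, drawing $f_n$ uniformly from the general linear group and union-bounding over the Grassmannian of rank-$m_2 l_n$ subspaces, which closes for $q$ (or $n$) sufficiently large. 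I expect the main obstacle to be exactly this universality step, since $f_n$ must neutralise every admissible $K_E$ without Alice or Bob knowing the network topology; once it is secured, the three conclusions follow from the reliability of $\tilde\Phi_n$, the bijectivity of $f_n$, and Theorem \ref{T1}.
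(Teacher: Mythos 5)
Your overall architecture---an outer linear secrecy layer composed with the reliable code of Proposition \ref{T2}, followed by Theorem \ref{T1} to promote passive-attack secrecy to secrecy against every strategy $\alpha^n$---is exactly the paper's. The gap is in the universality step, which you correctly identify as the main obstacle but do not close. Two concrete problems. First, you sacrifice exactly $m_2 l_n$ dimensions for the scramble $L$. With that budget, perfect masking of a rank-$m_2 l_n$ observation map $V$ requires $\im (F_L)$ (writing $f_n=[F_M\ F_L]$) to intersect $\Ker V$ trivially, and for a uniformly random choice this fails with probability $\Theta(1/q)$ for each fixed $K_E$---a constant independent of $n$, and $q$ is fixed by the theorem statement---so no union bound can rescue it. Second, your proposed union bound runs over ``the Grassmannian of rank-$m_2 l_n$ subspaces'' of $\FF_q^{\tilde k_n}$, a set of size $q^{\Theta(l_n^2)}$, which no available per-subspace failure probability can beat. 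The observation you are missing is that Eve's observation subspace is not arbitrary: it is generated by a \emph{single} $m_6\times m_3$ matrix $K_E$ acting identically in each of the $n$ uses, so the union bound need only run over the at most $q^{m_6 m_3}$ matrices $K_E$---a constant number.

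The paper repairs both points simultaneously: it sacrifices $m_2 l_n+\lceil\sqrt{l_n}\rceil$ dimensions rather than $m_2 l_n$, uses a universal$_2$ hash $f_R$ together with the privacy-amplification bound (Proposition \ref{T4}) to get expected leakage at most $q^{-\lceil\sqrt{l_n}\rceil}$, applies Markov's inequality so that each fixed $K_E$ fails with probability at most $q^{-m_6m_3-1}$, union bounds over the $q^{m_6m_3}$ matrices, and finally invokes the fact that for a linear code the leakage is an integer multiple of $\log q$ to upgrade ``exponentially small'' to the exact zero demanded by \eqref{H3-182}. Your MRD/Gabidulin alternative is a legitimate different route (it is what \cite{KMU1,KMU,SK} do), but it forces the symbol alphabet to be an extension field of degree at least $n$ and would have to be meshed with the already-growing extension field used inside Proposition \ref{T2B}; if you take that route you must supply those details. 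One smaller correction: Theorem \ref{T1} requires the \emph{network} to be linear, not the encoder $\phi_n$; where linearity of the code actually enters is in the final integrality argument that forces the leakage to be exactly zero.
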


Before our proof, we prepare basic facts about information-theoretic security. 
%We recall universal2 hash random function.
We focus on a random hash function $f_R$ from ${\cal X}$ to ${\cal Y}$ with random variable $R$ deciding the function $f_R$.
It is called {\it universal2} when 
\begin{align}
{\rm Pr}\{ f_R(x)=y\}\le \frac{|{\cal Y}|}{|{\cal X}|}
\end{align}
for any $x \in {\cal X}$ and $y \in {\cal Y}$.

For $s\in (0,1]$, we define the conditional R\'{e}nyi entropy $H_{1+s}(X|Z)$ for the joint distribution $P_{XZ}$ as \cite{hayashi11}
\begin{align}
H_{1+s}(X|Z):= \frac{-1}{s}\log \sum_{z \in {\cal Z}} P_Z(z) \sum_{x \in {\cal X}} P_{X|Z}(x|z)^{1+s},
\end{align}
which is often denoted by $H_{1+s}^{\uparrow}(X|Z)$ in \cite{TBH,HW}.
When $X$ obeys the uniform distribution, we have 
\begin{align}
H_{1+s}(X|Z) \ge  \log\frac{  |{\cal X}|}{|{\cal Z}|}. \Label{F15}
\end{align}

\begin{proposition}
\cite{bennett95privacy,HILL}\cite[Theorem 1]{hayashi11}\Label{T4}
\begin{align}
I(f_R(X);Z|R) \le \frac{e^{s \log |{\cal Y}|- H_{1+s}(X|Z)}}{s}
\end{align}
\end{proposition}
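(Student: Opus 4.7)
The plan is to combine a generic ``one-shot'' mutual-information / R\'enyi-entropy inequality with a tailored application of the universal2 property. As a first step, I would prove the pointwise bound
\begin{align*}
I(Y;Z) \le \frac{|\mathcal{Y}|^{s}\, e^{-s H_{1+s}(Y|Z)}-1}{s}
\end{align*}
for any finite-alphabet joint distribution $P_{YZ}$ with $Y\in\mathcal{Y}$ and any $s\in(0,1]$. This follows by starting from $I(Y;Z)\le \mathbb{E}_{YZ}[\log(|\mathcal{Y}|P_{Y|Z}(Y|Z))]$ (which comes from the nonnegativity of $D(P_Y\|\mathrm{Unif}_{\mathcal{Y}})$), applying the elementary bound $\log t\le (t^{s}-1)/s$ to $t=|\mathcal{Y}| P_{Y|Z}(Y|Z)$, and recognizing $\mathbb{E}_{YZ}[P_{Y|Z}(Y|Z)^{s}]=e^{-s H_{1+s}(Y|Z)}$ directly from the paper's definition of $H_{1+s}$. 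Applying this pointwise with $Y=f_R(X)$ for each realization $R=r$ and averaging via $I(f_R(X);Z\mid R)=\mathbb{E}_R I(f_R(X);Z)$ reduces the theorem to controlling $\mathbb{E}_R\!\left[\mathbb{E}_Z\!\left[\sum_y P_{f_R(X)\mid Z}(y|Z)^{1+s}\right]\right]$.

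The decisive step is to show that, for each $z$,
\begin{align*}
\mathbb{E}_R\!\left[\sum_y P_{f_R(X)\mid Z}(y|z)^{1+s}\right] \le \sum_x P_{X\mid Z}(x|z)^{1+s} + |\mathcal{Y}|^{-s}.
\end{align*}
To obtain it, I would rewrite the left-hand side as $\sum_x P_{X\mid Z}(x|z)\cdot \mathbb{E}_R[P_{f_R(X)\mid Z}(f_R(x)|z)^{s}]$, apply Jensen's inequality (concavity of $t\mapsto t^{s}$ for $s\in(0,1]$) to move $\mathbb{E}_R$ inside the $s$-power, and then use universal2, $\Pr_R\{f_R(x')=f_R(x)\}\le 1/|\mathcal{Y}|$ for $x'\ne x$, to bound the resulting inner quantity by $P_{X\mid Z}(x|z)+1/|\mathcal{Y}|$. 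Finally, $(a+b)^{s}\le a^{s}+b^{s}$ (valid for $s\in(0,1]$) splits this into $P_{X\mid Z}(x|z)^{s}+|\mathcal{Y}|^{-s}$; averaging over $x,z$ yields the claim, the first term giving $e^{-s H_{1+s}(X|Z)}$ and the second $|\mathcal{Y}|^{-s}$. Substituting back into the pointwise bound, the product $|\mathcal{Y}|^{s}\cdot|\mathcal{Y}|^{-s}=1$ exactly cancels the $-1$ in the numerator, leaving $e^{s\log|\mathcal{Y}|-s H_{1+s}(X|Z)}/s$, which matches the statement (modulo an implicit factor of $s$ multiplying $H_{1+s}(X|Z)$ in the exponent of the displayed inequality).

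The main obstacle is arranging the universal2 step so that the additive correction emerges as $|\mathcal{Y}|^{-s}$ rather than the coarser $|\mathcal{Y}|^{-1}\sum_{x'}P_{X|Z}(x'|z)^{s}$ that a na\"ive expansion of $(\sum_x a_x)^{1+s}$ via $(\sum a_x)^{s}\le\sum a_x^{s}$ would produce; that coarser bound will not combine with the diagonal contribution to give the sharp exponential form, because $\sum_{x'}P_{X|Z}(x'|z)^{s}$ is not controlled by a constant. Applying $\mathbb{E}_R$ \emph{before} taking the $s$-power (so that Jensen acts on the concave power) and only then using the subadditivity $(a+b)^{s}\le a^{s}+b^{s}$ is the key trick that delivers exactly the $|\mathcal{Y}|^{-s}$ correction needed to absorb the $-1$ in the pointwise bound and reproduce the clean exponential form stated in the proposition.
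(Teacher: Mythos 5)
Your proof is correct and is essentially the argument behind the cited result \cite[Theorem 1]{hayashi11}, which the paper imports without proof: bound $I(Y;Z)$ by $D(P_{YZ}\|U_{\mathcal{Y}}\times P_Z)$, apply $\log t\le(t^{s}-1)/s$, and then use Jensen (concavity of $t\mapsto t^{s}$) together with the universal2 collision bound to obtain $\mathbb{E}_R\bigl[e^{-sH_{1+s}(f_R(X)|Z)}\bigr]\le e^{-sH_{1+s}(X|Z)}+|\mathcal{Y}|^{-s}$, the $|\mathcal{Y}|^{-s}$ term exactly cancelling the $-1$. You are also right that the exponent in the displayed statement should be read as $s\bigl(\log|\mathcal{Y}|-H_{1+s}(X|Z)\bigr)$: that is the form the paper actually invokes later (the bound $e^{s(\bar{k}_n\log q-H_{1+s}(M|Y_E^{l_n}))}/s$), so the missing factor of $s$ is a typo in the display rather than a gap in your argument.
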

for $s \in (0,1]$.

\begin{proofof}{Theorem \ref{T3}}
We choose a sequence of non-local codes $\{\Phi_n=(\phi_n,\psi_n)\}$  given in Corollary \ref{T2}.
%For an arbit real number $\epsilon >0$,
We fix $ \bar{k}_n:= k_n- m_{2} {l_n}- \lceil \sqrt{l_n}\rceil$.
Now, we choose a universal2 linear surjective random hash function $f_R$ from $\FF_q^{k_n}$ to $\FF_q^{\bar{k}_n}$.
%Such a random hash function can be realized with calculation complexity $O(k_n \log k_n)$ \cite{Hayashi-Tsurumaru}.

To construct our non-local code, we consider a virtual protocol as follows. 
First, Alice sends a larger message $M$ by using the non-local code $\Phi_n$,
and Bob recovers it.
%We denote the information leaked to Eve by $W$.
Second, Alice randomly chooses $R$ deciding the hash function $f_R$ and sends it to Bob via a public channel.
Finally, Alice and Bob apply the hash function $f_R$ to their message, and denote the  result value by $\bar{M}$ so that Alice and Bob share the information  $\bar{M}$ with a probability of close to $1$.

Since the conditional mutual information between
$\bar{M}$ and $Y_E^{l_n}$ depends on $\Phi_n,\bm{K},\bm{H},\alpha^n$,
we denote it by $ I(\bar{M};Y_E^{l_n}|R)[\Phi_n,\bm{K},\bm{H},\alpha^n]$.
Theorem \ref{T1} shows $ I(\bar{M};Y_E^{l_n}|R)[\Phi_n,\bm{K},\bm{H},\alpha^n]=
 I(\bar{M};Y_E^{l_n}|R)[\Phi_n,\bm{K},\bm{H},0]$, which does not depend on $K_B,H_B ,H_E$ and depends only on $K_E$.
Now, we evaluate this leaked information via a similar idea to that reported in \cite{Matsumoto2011}.
Since inequality \eqref{F15} implies that 
$H_{1+s}(M|Y_E^{l_n}) \ge( k_n -{l_n} m_{2}) \log q$,
Proposition \ref{T4} yields
\begin{align}
I(\bar{M};Y_E^{l_n}|R)[\Phi_n,\bm{K},\bm{H},0]
\le &\frac{e^{s (\bar{k}_n\log q - H_{1+s}(M|Y_E^{l_n}))}}{s} \nonumber \\
\le & \frac{q^{s (\bar{k}_n- k_n +{l_n} m_{2})}}{s}
\le \frac{q^{-s \lceil \sqrt{l_n}\rceil }}{s}.
\end{align}
We set $s=1$.
For each matrix $K_E \in \FF_q^{m_{6}\times m_{{3}}}$ satisfying $\rank K_E= m_2$,
Markov inequality guarantees that the inequality
\begin{align}
I(\bar{M};Y_E^{l_n})[\Phi_n,\bm{K},\bm{H},0]|_{R=r}
\le q^{- \lceil \sqrt{l_n}\rceil +c+1} \Label{F20a}
\end{align}
holds at least with probability $1-q^{-c-1}$.
Since the number of matrices $K_E$ satisfying $\rank K_E=m_{2} $ is upper bounded by $ q^{m_{6}m_{{3}}}$,
there exists a matrix $K_E \in \FF_q^{m_{6}\times m_{{3}}}$ such that
$\rank K_E= m_2$ and \eqref{F20a} does not hold
at most with probability $q^{m_{6}m_{{3}}} q^{-c-1}$.
Hence, \eqref{F20a} holds for any matrix $K_E \in \FF_q^{m_{6}\times m_{{3}}}$ satisfying $\rank K_E= m_2$
at least with probability $1-q^{m_{6}m_{{3}}} q^{-c-1}$.
Letting $c$ be $m_{6}m_{3}$, we have
\begin{align}
I(\bar{M};Y_E^{l_n})[\Phi_n,\bm{K},\bm{H},0]|_{R=r}
\le q^{- \lceil \sqrt{l_n}\rceil +m_{6}m_{{3}}+1} \Label{F20}
\end{align}
for any matrix $K_E \in \FF_q^{m_{6}\times m_{{3}}}$ satisfying $\rank K_E= m_2$
at least with probability $1-\frac{1}{q}$.
Therefore, there exists a suitable hash function $f_r$ such that
\begin{align*}
I(\bar{M};Y_E^{l_n})[\Phi_n,\bm{K},\bm{H},0]|_{R=r}
\le q^{- \lceil \sqrt{l_n}\rceil +m_{6}m_{{3}}+1},
\end{align*}
which goes to zero as $n$ goes to infinity because 
$m_{6}m_{{3}}+1$ is a constant.
Since the code on our network is linear,
Eve observes a subspace of input information
$\FF_q^{\bar{k}_n} $.
Hence, the amount of leaked information is an integer times $\log q$.
Hence, as discussed in  \cite{Matsumoto2011a},
when $l_n$ is sufficiently large, there exists a suitable hash function $f_r$ such that
\begin{align*}
I(\bar{M};Y_E^{l_n})[\Phi_n,\bm{K},\bm{H},0]|_{R=r}
=0.
\end{align*}

Now, we return to the construction of non-local codes.
We choose the sets ${\cal M}$ and ${\cal L}$ as
$\FF_q^{\bar{k}_n}$ and $\FF_q^{m_{2} {l_n}+ \lceil \sqrt{l_n}\rceil}$, respectively.
Since the linearity and the surjectivity of $f_r$ implies that
$|f_r^{-1}(x)|=q^{m_{2} {l_n}+ \lceil \sqrt{l_n}\rceil}$ for any element $x\in {\cal M} $,
we can define the invertible function $\bar{f}_r$ from ${\cal M}\times {\cal L}$
to the domain of $f_r$, i.e., $\FF_q^{k_n} $ such that $\bar{f}_r^{-1} (f_r^{-1}(x))=\{x\}\times {\cal L}$
for any element $x\in {\cal M} $.
This condition implies that $f_r \circ \bar{f}_r(x,y)=x$ for $(x,y)\in {\cal M}\times {\cal L}$.
Then, we define our non-local encoder as $\bar{\phi}_n:= \phi_n \circ \bar{f}_r$,
and our non-local decoder as $\bar{\psi}_n:= f_r\circ \psi_n $.
The sequence of non-local codes $\{(\bar{\phi}_n,\bar{\psi}_n)\}$
satisfies the desired requirements.
\end{proofof}

\begin{remark}\Label{R-Dim}
If we replace the condition \eqref{1-6X} by the condition
\begin{align}
\rank H_B\le m_{1}, ~ \rank K_E\le m_{2} ,
\Label{1-6X2}
\end{align}
the Proposition \ref{T2} and Theorem \ref{T3} still hold due to the following reason.
For $(K_B,H_B,K_E,H_E)$ to satisfy \eqref{1-6} and \eqref{1-6X2}, 
there exists $(K_B',H_B',K_E',H_E')$ to satisfy \eqref{1-6} and \eqref{1-6X} such that 
$P_e[\Phi_n, \bm{K},\bm{H},\alpha^n] \le P_e[\Phi_n, \bm{K}',\bm{H}',\alpha^n]$
and
$I(M;Y_E^n)[\Phi_n,\bm{K},\bm{H},\alpha^n]\le
I(M;Y_E^n)[\Phi_n,\bm{K}',\bm{H}',\alpha^n]$.
Hence,
the Proposition \ref{T2} and Theorem \ref{T3} still hold under this modification.
\end{remark}

\begin{remark}[Efficient non-local code construction]
We discuss an efficient construction of our non-local code from a non-local code 
$(\phi_n,\psi_n)$ given in Corollary \ref{T2} with $q=2$.
A modified form of the Toeplitz matrices is also shown to be 
a universal2 linear surjective hash function, 
which is given by a concatenation $(T(S), I)$ of the 
$(m_{2} l_n+ \lceil \sqrt{l_n}\rceil) \times \bar{k}_n$ Toeplitz matrix $T(S)$ and 
the $\bar{k}_n \times \bar{k}_n$ identity matrix $I$ \cite{Hayashi-Tsurumaru},
where $S$ is the random seed used to decide the Toeplitz matrix and belongs to $\FF_2^{k_n-1}$.
The (modified) Toeplitz matrices are particularly useful in practice, because there exists an efficient multiplication algorithm using the fast Fourier transform algorithm with complexity $O(l_n\log l_n)$. 

When the random seed $S$ is fixed, 
the encoder for our non-local code is given as follows.
By using the scramble random variable $L \in \FF_2^{m_{2} l_n+ \lceil \sqrt{l_n}\rceil)}$,
the encoder $\bar{\phi}_n$ is given as $\phi_{n}
\Big( 
\Big(
\begin{array}{cc} 
I & - T(S) \\
0 & I
\end{array}
\Big)
\Big(
\begin{array}{c} 
M \\
L
\end{array}
\Big)
\Big)$
because 
$
(I,T(S))
\Big(
\begin{array}{cc} 
I & - T(S) \\
0 & I
\end{array}
\Big)
= (I, 0)$.
(The multiplication of Toeplitz matrix $T(S)$ can be performed as a part of a circulant matrix. 
For example, the reference \cite[Appendix C]{Hayashi-Tsurumaru}
provides a method to give a circulant matrix.).
A more efficient construction for univeral2 hash function is discussed in \cite{Hayashi-Tsurumaru}.
Hence, the decoder $\bar{\psi}_n$ is given as $Y_B^{l_n} \mapsto (I,T(S)) \psi_{n}(Y_B^{l_n})$.
\end{remark}

\begin{remark}
Here, we clarify the difference between our results and the setting of the preceding papers \cite{KMU1,KMU2,KMU,Yao2014,Zhang,Rashmi}, which consider correctness and secrecy.
Their secrecy analysis is different from our analysis
although the non-local code construction in \cite{KMU1,KMU2,KMU,Yao2014,Zhang} does not depend on the concrete form of 
matrices $K_B, K_E, H_B, H_E$, which is similar to our non-local code construction.

While the papers \cite{SK,Yao2014} considered correctness when the error exists,
it discusses the secrecy only when there is no error.
Similarly, the paper \cite{Rashmi} considers a different active adversary model, in which,
it discusses
the node-repair and data-reconstruction operations
even in the presence of such an attack
while the model of passive eavesdroppers in the paper \cite{Rashmi} 
discusses the secrecy with respect to the message to be transmitted.
Indeed, the papers \cite{SK,Yao2014} provided a statement similar to Theorem \ref{T3}.
However, it showed only Eq. \eqref{H3-181} and 
$\lim_{n \to \infty} \max_{ \bm{K},\bm{H}} %\max_{\alpha^n} 
I(M;Y_E^n)[\Phi_n,\bm{K},\bm{H},0]=0$
instead of \eqref{H3-182}
by combining Proposition \ref{T2B} and the result of the paper \cite{SK}. 
To show \eqref{H3-182}, we need to employ Theorem \ref{T1}.
If we do not apply Theorem \ref{T1} in step \eqref{F20} in our proof of Theorem \ref{T3},
we have to multiply the number of choices of strategy $\alpha^n$.
As a generalization of \eqref{F17}, this number is given in \eqref{F18},
which grows up double-exponentially. % with respect to $q$.
Hence, our proof of Theorem \ref{T3} does not work without the use of Theorem \ref{T1}.

While the papers \cite[Proposition 5]{KMU}\cite{Zhang} consider the secrecy when the error exists,
it addresses the amount of leaked information only when the eavesdropper does not know the information of the noise.
That is, they evaluate the mutual information between $ M$ and $Y_E^n $.
However, our analysis evaluates the leaked information when the eavesdropper knows the information about the noise.
That is, we address the mutual information between $ M$ and the pair $(Y_E^n,Z^n) $.
%,KJL}
\end{remark}

\section{Asymptotic setting with secrecy}\Label{S5}
When Alice and Bob can communicate via public channel,
the verification of correctness can be done by a universal2 hash function \cite[Section VIII]{Fung}\cite[Step 4 of Protocol 2]{H17}.
When we employ a modified Toeplitz matrix as a universal2 hash function
and $m_2$ bits are exchanged for the verification,
it has only calculation complexity $O(m_2 \log m_2)$.
Due to this step, 
we can guarantee the correctness with probability $1-2^{-m_2}$, which is called the significance level\cite[Section VIII]{Fung}.
Hence, in this case, we consider the case when only the secrecy is imposed and the robustness is not imposed.
%In this case, we impose the correctness of the case with passive attack instead of the robustness.
That is, we impose the following condition.
\begin{align}
\lim_{n \to \infty} \max_{ \bm{K}} P_e[\Phi_n, \bm{K},0,0]
=\lim_{n \to \infty} \max_{ \bm{K},\bm{H}} P_e[\Phi_n, \bm{K},\bm{H},0]=0.
\Label{H3-18K}
\end{align}
However, even when the verification of correctness is passed,
there is a possibility that Eve can make an active attack to satisfy 
\begin{align}
P_e[\Phi_n, \bm{K},\bm{H},\alpha^n]=0.
\Label{H3-30}
\end{align}
Hence, as the secrecy, we impose the following condition.
\begin{align}
 \max_{ \bm{K},\bm{H}} \max_{\alpha^n} 
I(M;Y_E^n)[\Phi_n,\bm{K},\bm{H},\alpha^n]=0\Label{47-2} .
\end{align}
Here, both maximums are taken with respect to
$(K_E,H_E) \in 
\FF_q^{m_{{6}} \times m_{3}}\times 
\FF_q^{m_{{6}} \times m_{5}}$ with \eqref{1-6} and \eqref{1-6X}.
We notice that the situation of the correctness \eqref{H3-18K}
is different from the situation of the secrecy \eqref{47-2}.
Indeed, it is possible to restrict the range of $\alpha^n$
by imposing the condition \eqref{H3-30}.
However, since it is not so easy to handle the condition \eqref{H3-30},
the maximum for $\alpha^n$
is addressed without the restriction \eqref{H3-30}.
That is, the correctness \eqref{H3-18K} addresses only the case with passive attack,
but the secrecy \eqref{47-2} addresses the cases with active attack.

This setting appears when we consider quantum key distribution, as explained in Section \ref{S6}.
Then, we have the following theorem to analyze this problem.

\begin{theorem}\Label{T3B}
%Assume that $m_3=m_4=m_0 $.
There exists a sequence of non-local codes $\Phi_{n}$ 
of block-length $l_n$ on finite field $\FF_q$
%of block-length $l_n$ with input ${X}^{l_n}$ and output ${Y_B}^{l_n}$
whose message set is $\FF_q^{k_n}$ such that conditions \eqref{H3-18K} and \eqref{47-2} and
\begin{align}
&\lim_{n \to \infty} \frac{k_n}{l_n} = m_{0}-m_{2} \Label{47-1}
\end{align}
holds.
\end{theorem}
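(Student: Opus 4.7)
The plan is to mirror the proof of Theorem \ref{T3} almost step by step, but to start from a simpler base code: since robustness is no longer required, the base code only needs correctness under the passive attack $\alpha^n=0$, and therefore can carry rate approaching $m_{0}$ rather than $m_{0}-m_{1}$. The structural observation behind this is that when $\alpha^n=0$ we have $Y_B^n = K_B X^n$ with $\rank K_B = m_{0}$, so $m_{0}$ is the natural information-theoretic capacity to match. Combined with a privacy-amplification step that removes $m_{2} l_n + o(l_n)$ symbols, this will yield the target rate $m_{0}-m_{2}$, and Theorem \ref{T1} will again be the device that turns passive-attack secrecy into active-attack secrecy.

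First, I would establish existence of a sequence of base non-local codes $\{(\phi_n,\psi_n)\}$ with message space $\FF_q^{k_n}$ and block length $l_n$ such that $k_n/l_n \to m_{0}$ and $\max_{\bm{K},\bm{H}} P_e[\Phi_n,\bm{K},\bm{H},0] \to 0$. Because under $\alpha^n=0$ the received word does not depend on $\bm{H}$, this is essentially standard (non-secure) network coding at the cut capacity $\rank K_B$, either via a random linear encoder with an expurgation over $K_B$ of rank $m_{0}$, or via a deterministic universal encoder augmented with a vanishing-rate training block that lets Bob learn $K_B$. This choice already implies the desired correctness condition \eqref{H3-18K} for the base code, and it will be preserved under privacy amplification since amplification acts only on the message set.

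Second, I would apply the privacy-amplification step of the proof of Theorem \ref{T3} essentially verbatim. Set $\bar k_n := k_n - m_{2} l_n - \lceil\sqrt{l_n}\rceil$ and let $f_R : \FF_q^{k_n}\to \FF_q^{\bar k_n}$ be a universal2 linear surjective hash function. Define, as there, an invertible map $\bar f_r$ from $\FF_q^{\bar k_n}\times \FF_q^{m_{2} l_n + \lceil\sqrt{l_n}\rceil}$ onto $\FF_q^{k_n}$ splitting $f_r^{-1}$, and set $\bar\phi_n := \phi_n\circ \bar f_r$, $\bar\psi_n := f_r\circ \psi_n$. The resulting rate is $\bar k_n/l_n \to m_{0}-m_{2}$, establishing \eqref{47-1}, and \eqref{H3-18K} is inherited from the base code.

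Third, for secrecy, I would argue exactly as in the proof of Theorem \ref{T3}. Under a passive attack the uniformity of $M$ and \eqref{F15} give $H_{1+s}(M|Y_E^{l_n}) \geq (k_n - l_n m_{2})\log q$, so Proposition \ref{T4} with $s=1$ yields $I(\bar M; Y_E^{l_n}\mid R)[\Phi_n,\bm{K},\bm{H},0] \leq q^{-\lceil\sqrt{l_n}\rceil}$. A Markov inequality followed by a union bound over the at most $q^{m_{6}m_{3}}$ equivalence classes of matrices $K_E$ of rank $m_{2}$ (two $K_E$ equivalent iff they share a kernel) produces a deterministic $r$ for which the leakage is bounded, uniformly in $K_E$, by $q^{-\lceil\sqrt{l_n}\rceil + m_{6}m_{3}+1}$. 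Because the code is linear, the leaked information is an integer multiple of $\log q$, so this bound forces exact zero leakage for all sufficiently large $l_n$. Finally, Theorem \ref{T1} lifts this zero-leakage statement to every strategy $\alpha^n$, establishing \eqref{47-2}.

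The main obstacle I anticipate is the first step, namely constructing a universal base code of rate $m_{0}$ without knowing $K_B$. Proposition \ref{T2} cannot be invoked with $m_{1}=0$ off the shelf, both because its hypothesis reads $m_{2}+m_{1}< m_{0}$ and because it asks for robustness against active errors that we no longer have. So this step requires its own short justification, either a random-coding plus expurgation argument giving a code that inverts $K_B$ with probability tending to one uniformly in $K_B$, or a short training preamble of length $o(l_n)$ in which Alice transmits a basis of $\FF_q^{m_{3}}$ so that Bob can determine $K_B$ before the payload; all subsequent steps then reduce to the corresponding calculations in the proof of Theorem \ref{T3}.
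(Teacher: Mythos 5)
Your proposal is correct and follows essentially the same route as the paper: the paper obtains Theorem \ref{T3B} by specializing Theorem \ref{T3} to $m_1=0$ (stated as Corollary \ref{T2E}) and then invoking Theorem \ref{T1}, which is exactly your combination of a rate-$m_0$ base code, the verbatim privacy-amplification step, and the passive-to-active reduction. The obstacle you anticipate in your last paragraph does not actually arise: with $m_1=0$ the hypothesis $m_2+m_1<m_0$ of Proposition \ref{T2} reduces to $m_2<m_0$ (needed anyway for a positive rate), and $\rank H_B=0$ forces $H_B=0$, so its robustness guarantee degenerates to precisely the passive-attack correctness you need and the proposition can be invoked off the shelf without any expurgation or training-preamble construction.
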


From the definition, we see that
$P_e[\Phi_n, \bm{K},0,0]=P_e[\Phi_n, \bm{K},\bm{H},0]$.
Also, note that $P_e[\Phi_n, \bm{K},0,\alpha^n]$ does not depend on $K_E$.
Further, the rate $m_{0}-m_{2}$ is asymptotically optimal, 
i.e., there is no non-local code surpassing the rate $m_{0}-m_{2}$,
which follows from the converse part of the conventional wire-tap channel \cite{wyner75,csiszar78}.

To show the above theorem, as a special case of Theorem \ref{T3} with $m_1=0$, 
we prepare the following corollary.

\begin{corollary}\Label{T2E}
%Assume that $m_3=m_4=m_0 $.
There exists a sequence of non-local codes $\Phi_{n}$ 
of block-length $l_n$ on finite field $\FF_{q}$
%with input ${X'}^{n'}$ and output ${Y_B'}^{n'}$
whose message set is $\FF_{q}^{k_{n}}$ such that
\begin{align}
&\lim_{n \to \infty} \frac{k_{n}}{l_n} = m_{0}-m_{2}\\
&%\lim_{n \to \infty} 
\max_{ \bm{K}}
I(M;Y_E^n)[\Phi_n,\bm{K},0,0]=0,\\
&\lim_{n \to \infty} \max_{ \bm{K}} 
P_e[\Phi_n, \bm{K},0,0]=0,
\end{align}
where the maximum is taken with respect to
$(K_B,K_E) \in 
\FF_q^{m_{{4}} \times m_{{3}}}\times 
\FF_q^{m_{{6}} \times m_{3}}$ under the conditions \eqref{1-6} and \eqref{1-6X}.
\end{corollary}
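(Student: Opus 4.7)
The plan is to read off Corollary \ref{T2E} as the clean specialization of Theorem \ref{T3} to $m_1=0$. Here both the correctness condition $P_e[\Phi_n,\bm{K},0,0]$ and the secrecy condition $I(M;Y_E^n)[\Phi_n,\bm{K},0,0]$ have $\bm{H}=0$ and $\alpha^n=0$, so the subtle step in the Theorem \ref{T3} proof where Theorem \ref{T1} is invoked to reduce active attacks to passive ones is simply not needed. This makes the whole argument a more routine version of Theorem \ref{T3}.

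First I would invoke Proposition \ref{T2} with $m_1=0$ to obtain universal correctness-achieving non-local codes $(\phi_n,\psi_n)$ of block length $l_n$ with message space $\FF_q^{k_n}$, rate $k_n/l_n\to m_0$, and $\max_{\bm{K}}P_e[\Phi_n,\bm{K},0,0]\to 0$. Here I implicitly use $m_2<m_0$ (otherwise the statement has nonpositive rate and is trivial), which satisfies the hypothesis $m_1+m_2<m_0$ of Proposition \ref{T2}.

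Next I would perform privacy amplification exactly as in the proof of Theorem \ref{T3}. Set $\bar k_n:=k_n-m_2 l_n-\lceil\sqrt{l_n}\rceil$ and draw a universal$_2$ linear surjective random hash function $f_R:\FF_q^{k_n}\to\FF_q^{\bar k_n}$. Using \eqref{F15} one has $H_{1+s}(M\mid Y_E^{l_n})\ge(k_n-m_2 l_n)\log q$ for every $K_E$ of rank $m_2$; Proposition \ref{T4} with $s=1$ then yields an averaged leakage bound of order $q^{-\lceil\sqrt{l_n}\rceil}$. A Markov inequality together with a union bound over the at most $q^{m_6 m_3}$ admissible $K_E$ produces a deterministic hash $f_r$ whose leakage is $\le q^{-\lceil\sqrt{l_n}\rceil+m_6 m_3+1}$ uniformly in $K_E$.

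Finally, I would exploit linearity: Eve's observation is a linear subspace of $\FF_q^{\bar k_n}$, so $I(f_r(M);Y_E^{l_n})$ is an integer multiple of $\log q$ and hence must be exactly $0$ once $l_n$ is large enough, as in \cite{Matsumoto2011a}. The encoder and decoder are then assembled on top of $(\phi_n,\psi_n)$ as in Theorem \ref{T3}: choose a scramble set ${\cal L}=\FF_q^{m_2 l_n+\lceil\sqrt{l_n}\rceil}$, lift $f_r$ to an invertible linear map $\bar f_r:{\cal M}\times{\cal L}\to\FF_q^{k_n}$ satisfying $f_r\circ\bar f_r(x,y)=x$, and set $\bar\phi_n:=\phi_n\circ\bar f_r$, $\bar\psi_n:=f_r\circ\psi_n$. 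The only mildly delicate step is the simultaneous derandomization over all rank-$m_2$ matrices $K_E$; everything else is a direct specialization of the Theorem \ref{T3} argument.
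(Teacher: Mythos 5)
Your proposal is correct and matches the paper's route: the paper obtains Corollary \ref{T2E} simply as the special case $m_1=0$ of Theorem \ref{T3}, whose proof is exactly the chain you reproduce (Proposition \ref{T2} for the underlying correctness-achieving codes, then privacy amplification via Proposition \ref{T4} with derandomization over the at most $q^{m_6 m_3}$ admissible matrices $K_E$). Your observation that Theorem \ref{T1} can be skipped because the corollary only concerns the passive case $\bm{H}=0$, $\alpha^n=0$ is accurate but does not change the substance of the argument; the reduction from active to passive attacks is reinstated only afterwards, when the corollary is combined with Theorem \ref{T1} to yield Theorem \ref{T3B}.
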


Combining Corollary \ref{T2E} and Theorem \ref{T1},
we obtain Theorem \ref{T3B}.

Here, we compare existing results with Corollary \ref{T2E}.
As a similar result to Corollary \ref{T2E}, the following proposition is known.
Since Corollary \ref{T2E} does not require the assumptions $m_3=m_4=m_0 $ and $K_B=I$,
Corollary \ref{T2E} is slightly advantageous.
Hence, Theorem \ref{T3B} is a stronger statement than the following existing statement.

\begin{proposition}[\protect{\cite[Theorem 7]{Matsumoto2011a}},\cite{KMU}]\Label{T2D}
We assume that $m_3=m_4=m_0 $ and $K_B=I$.
There exists a sequence of non-local codes $\Phi_{n}$ 
of block-length $n$ on finite field $\FF_{q}$
%with input ${X'}^{n'}$ and output ${Y_B'}^{n'}$
whose message set is $\FF_{q}^{k_{n}}$ such that
\begin{align}
&\lim_{n \to \infty} \frac{k_{n}}{n} = m_{0}-m_{2}\\
&\lim_{n \to \infty} \max_{ \bm{K}}
I(M;Y_E^n)[\Phi_n,\bm{K},0,0]=0,\\
&\lim_{n \to \infty} 
P_e[\Phi_n, \bm{K},0,0]=0,
\end{align}
where the maximum is taken with respect to
$K_E \in \FF_q^{m_{{6}} \times m_{3}}$ with condition \eqref{1-6X}.
\end{proposition}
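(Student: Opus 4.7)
\noindent\textbf{Proof plan for Proposition \ref{T2D}.}
The assumptions $m_3=m_4=m_0$ and $K_B=I$ kill the channel: under a passive attack ($\bm{H}=0,\alpha=0$) we have $Y_B^n=X^n$ identically, so robustness is automatic for any encoder with a left inverse. Hence the whole burden of the proof is secrecy against a wiretapper that observes $Y_E^n=K_E X^n$ for some unknown $K_E\in\FF_q^{m_6\times m_0}$ of rank $m_2$. My strategy is the universal$_2$ hashing / privacy-amplification argument used inside the proof of Theorem \ref{T3}, simplified by the fact that no ``inner'' robust code is required.

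\noindent\textbf{Construction.}
Set $k_n:=n m_0 - n m_2 - \lceil\sqrt{n}\rceil$, take the scramble set ${\cal L}:=\FF_q^{nm_0-k_n}$, and pick a universal$_2$ linear surjective random hash function $f_R:\FF_q^{nm_0}\to\FF_q^{k_n}$. As in the last paragraph of the proof of Theorem \ref{T3}, linearity and surjectivity let me choose a bijection $\bar f_R:{\cal M}\times{\cal L}\to\FF_q^{nm_0}$ with $f_R\circ\bar f_R(m,\ell)=m$. I then define the encoder $\phi_n(M,L):=\bar f_R(M,L)$ (reshaped into an $m_0\times n$ matrix) and the decoder $\psi_n(Y_B^n):=f_R(Y_B^n)$. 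Since $Y_B^n=X^n$, we have $\psi_n\circ\phi_n=\mathrm{id}$, giving $P_e=0$ for every $\bm{K}$ and the rate $k_n/n\to m_0-m_2$.

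\noindent\textbf{Secrecy via privacy amplification and derandomisation.}
For a fixed $K_E$ of rank $m_2$, the uniform input $X^n$ on $\FF_q^{nm_0}$ and the fact that $Y_E^n=K_E X^n$ takes at most $q^{nm_2}$ values give, via \eqref{F15}, $H_{1+s}(X^n|Y_E^n)\ge(nm_0-nm_2)\log q$. Applying Proposition \ref{T4} with $s=1$ yields, in expectation over the hash seed $R$,
\begin{align*}
\mathbb{E}_R\, I(M;Y_E^n|R)\le q^{k_n-n(m_0-m_2)}=q^{-\lceil\sqrt{n}\rceil}.
\end{align*}
Markov's inequality then says that for each such $K_E$, a uniformly random seed $r$ fails the bound $I(M;Y_E^n)|_{R=r}\le q^{-\lceil\sqrt{n}\rceil+c+1}$ with probability at most $q^{-c-1}$. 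The set of admissible $K_E$ has cardinality at most $q^{m_6 m_0}$, so choosing $c:=m_6 m_0$ and union-bounding gives a single seed $r$ that works simultaneously for every rank-$m_2$ matrix $K_E$. Freezing this $r$ in the encoder/decoder above produces a deterministic non-local code with
\begin{align*}
\max_{\bm{K}} I(M;Y_E^n)[\Phi_n,\bm{K},0,0]\le q^{-\lceil\sqrt{n}\rceil+m_6 m_0+1}\longrightarrow 0,
\end{align*}
which delivers the three displayed conditions.

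\noindent\textbf{Main obstacle.}
The analytic half (the R\'enyi-entropy/privacy-amplification estimate) is routine once Proposition \ref{T4} is in hand; the real point is the universality of the single chosen hash seed across all admissible $K_E$. The obstruction could have been the size of that family, but because $m_0$ and $m_6$ are constants independent of $n$ while the expected leakage decays super-polynomially in $n$, the union bound closes easily. If I wanted the stronger ``exactly zero'' leakage (as in Theorem \ref{T3}), I would additionally invoke the observation of \cite{Matsumoto2011a} that for a linear code the leaked information is always an integer multiple of $\log q$, so once the bound is below $\log q$ it is forced to be $0$; this is not needed for the statement as worded, which only asks for the $n\to\infty$ limit.
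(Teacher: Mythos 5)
Your argument is correct, but note that the paper itself gives no proof of Proposition \ref{T2D}: it is quoted as a known result from \cite[Theorem 7]{Matsumoto2011a} and \cite{KMU}, stated only for comparison with Corollary \ref{T2E} (which the paper obtains as the $m_1=0$ case of Theorem \ref{T3}). Your proof is essentially a specialization of the machinery the paper does develop for Theorem \ref{T3} — uniform input, the bound \eqref{F15} on $H_{1+s}(X^n|Y_E^n)$ via $\rank K_E=m_2$, Proposition \ref{T4} with $s=1$, then Markov plus a union bound over the at most $q^{m_6 m_3}$ admissible $K_E$ to fix a single seed — with the inner robust code of Proposition \ref{T2} correctly discarded because $K_B=I$ makes the Alice-to-Bob channel the identity and $P_e=0$ trivially. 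All the steps check out: the bijection $\bar f_r$ makes $X^n$ uniform so that $M=f_r(X^n)$ has exactly the joint distribution required by the privacy-amplification bound, the exponent $k_n-n(m_0-m_2)=-\lceil\sqrt n\rceil$ is right, and the constants $m_0,m_6$ being independent of $n$ make the union bound close. Your closing remark is also accurate: the integrality of the leaked information (an integer multiple of $\log q$ for linear codes, as in \cite{Matsumoto2011a}) would be needed only for exact zero leakage, which this proposition, unlike Theorem \ref{T3}, does not claim.
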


%Using the above proposition, we can easily show the following corollary.

\section{Application to network quantum key distribution}\Label{S6}
%It is not easy to transmit quantum state for a long distance.
In this section, to realize long-distance communication with quantum key distribution,
using the result in Sections \ref{S3} and \ref{S5},
we consider a network of quantum key distribution.
Although the existing studies \cite{LWLZ,WHZ} discussed a similar case over routing networks,
they did not discuss the relation with network code including active attacks.

Assume  that the authorized sender, Alice, is connected to the authorized receiver, Bob,
via the network given by the graph $({V}, {E})$ with $|E|=k$.
A linear operation is fixed in each node so that 
we have the relation $Y_B=K_B X$ with Alice's input $X$ and Bob's output $Y_B$.
Then, if secure information transmission is available on each edge,
secure communication from Alice to Bob can be realized.
For every edge $(u,v)\in E$,
the distant nodes $u$ and $v$ generate secure common keys by quantum key distribution.
That is, $k$ pairs of secure keys are generated by quantum key distribution.
In the following, 
we discuss how we can make secure message transmission from Alice to Bob
by using these $k$ pairs of secure keys with public channels.
This kind of secure communication is called network quantum key distribution.

First, we consider the case when all nodes are authenticated.
In this case, Alice can securely send her message $X$ to Bob in the following way.
%We assume that all nodes are secure and proper linear operations are performed on all nodes.
%Alice can securely send her message $X$ to Bob by using these linear operations as follows.
%First, we consider the network coding using the graph $({V}, {E})$ with these linear operations.
Let $X_i$ be the random variable to be transmitted on the $i$-th edge.
Let $Z_i$ be the secure keys generated in the $i$-th edge by quantum key distribution.
When $X_i$ is directly transmitted, this information transmission is not secure.
To realize security, $Y_i:=X_i+Z_i$ is transmitted on the $i$-th edge, instead.
Then, a secure transmission in each edge is realized.
Hence, due to the above relation $Y_B=K_B X$, 
secure communication from Alice to Bob can be realized.

However, it is very difficult to guarantee security when a part of the nodes are occupied by Eve.
Such a model is often called a node adversary model  
while the model introduced in Section \ref{S2} is called an edge adversary model.
The main problem with network quantum key distribution is the realization of secure communication from Alice to Bob 
under a node adversary model. 
To investigate the security in the node adversary model, 
we convert a given node adversary model to a special case of the edge adversary model as in \cite{TJBK}.
In an edge adversary model,
Eve wiretaps and contaminates the information only on the edges $E_E$.
To apply the model to the current situation, we consider that 
all the edges linked to the nodes occupied by Eve are wiretapped and controlled by Eve.
When these occupied nodes communicate with each other, Eve's attack is an active attack.
That is, analysis for active attack is essential.
Therefore, we can apply Theorem \ref{T3} %and \ref{T3B}
to the security analysis of 
the direct transmission of the secret message via network quantum key distribution.
In quantum key distribution, it is usual to assume that Alice and Bob share 
secure random numbers whose lengths are asymptotically negligible in block-length $n$
because the asymptotically negligible keys are needed for 
authentication for the public channel.
In this case, to generate secure keys with length $O(n)$, 
we can employ Theorem \ref{T3B},
where the asymptotically negligible keys are used for an error verification test.

\begin{figure}[htbp]
\begin{center}
%\scalebox{1}{
\includegraphics[scale=0.4]{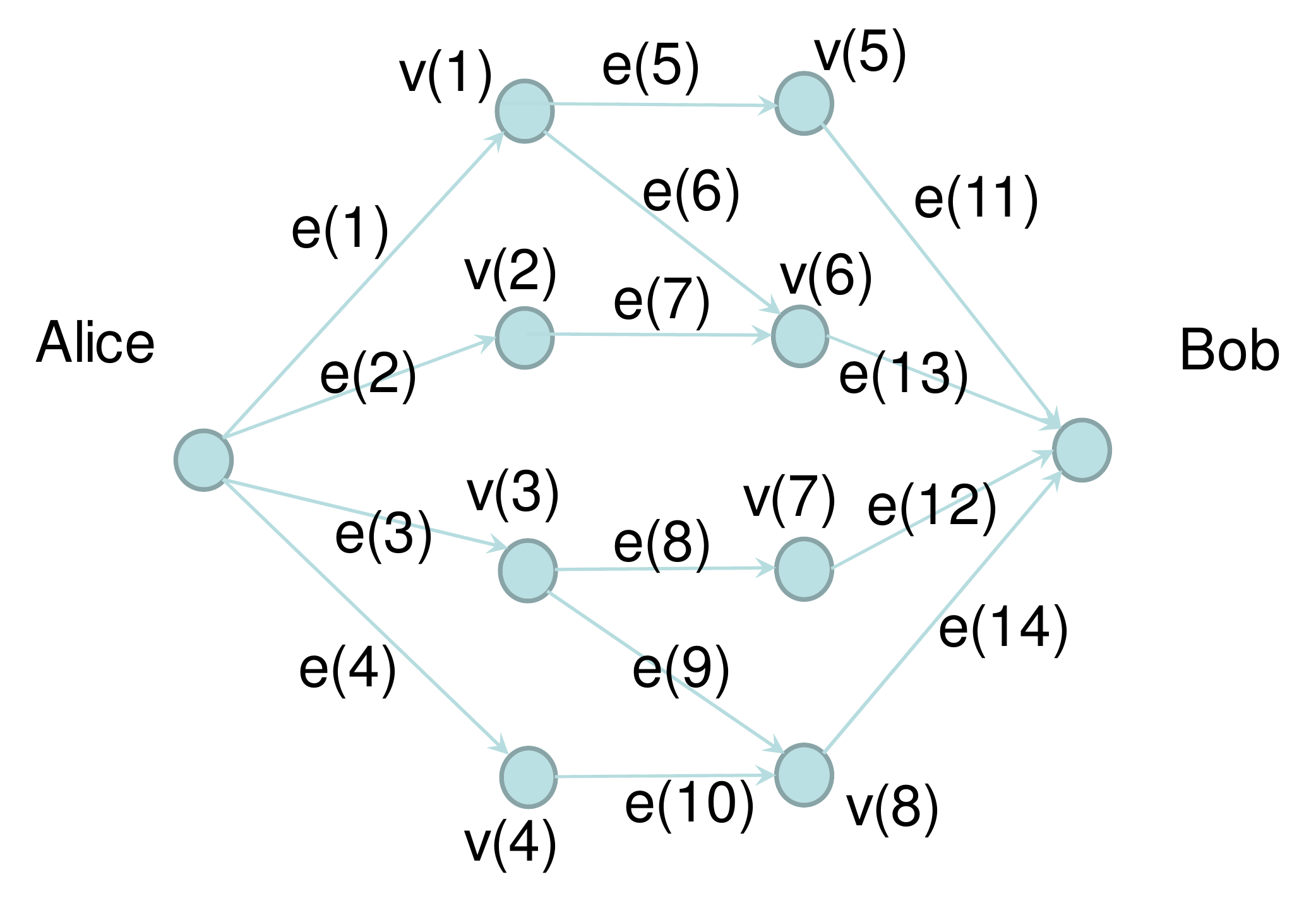}
\end{center}
\caption{Network with name of edges}
\Label{F3}
\end{figure}%

\begin{figure}[htbp]
\begin{center}
%\scalebox{1}{
\includegraphics[scale=0.4]{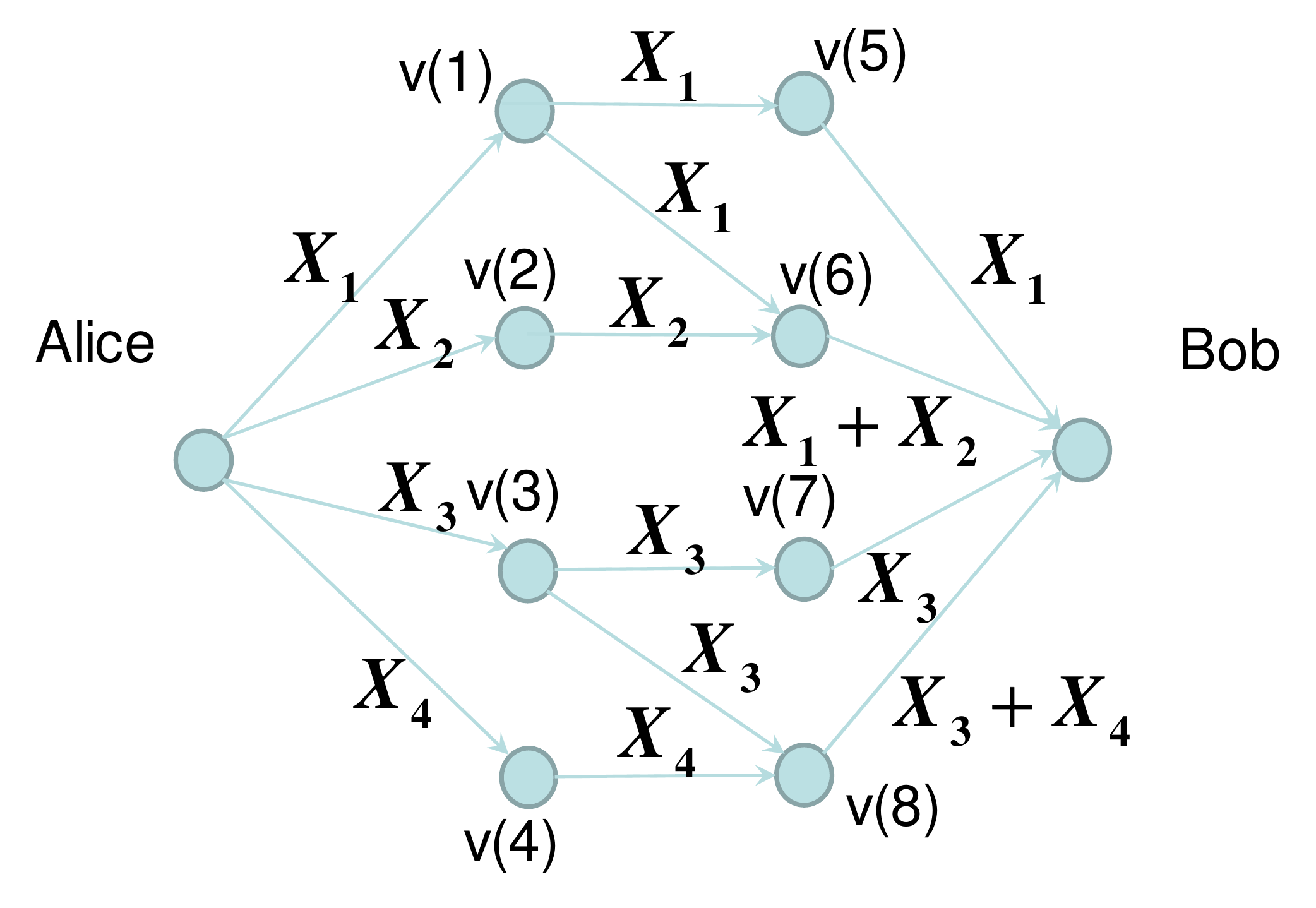}
\end{center}
\caption{Network with network flow}
\Label{F1B}
\end{figure}%

For example, we consider the network given in Fig. \ref{F3}, which has nodes
$v(1), \ldots, v(8)$ as intermediates nodes.
Fig. \ref{F1B} expresses the information flow on each edge in this network.
This network connects Alice and Bob with rank 4.
The ranks of $K_E$ and $H_B$ of typical cases
are summarized in Table \ref{TB2}.
Also, Section II-D of \cite{HOKC1} discusses the same network.

\begin{table}[htb]
  \begin{center}
    \caption{Ranks of $K_E$ and $H_B$ dependently of attacked nodes}\Label{TB2}
    \begin{tabular}{|c|c|c|} \hline
Nodes to be attacked & rank $K_E$ & rank $H_B$ \\ \hline 
$v(1)$ & 1 & 2 \\ \hline 
$v(2)$ & 1 & 1 \\ \hline 
$v(6)$ & 2 & 1 \\ \hline 
$v(2) \& v(5)$ & 2 & 2 \\ \hline 
$v(2) \& v(6)$ & 2 & 1 \\ \hline 
$v(1) \& v(3)$ & 2 & 4 \\ \hline 
$v(1) \& v(2)$ & 2 & 3 \\ \hline 
$v(1) \& v(6)$ & 2 & 2 \\ \hline 
$v(1) \& v(8)$ & 3 & 3 \\ \hline 
$v(6) \& v(8)$ & 4 & 2 \\ \hline 
    \end{tabular}
  \end{center}
\end{table}

When the number of nodes occupied by Eve is limited to 1,
the ranks of $K_E$ and $H_B$ are upper bounded by $2$.
In the latter case, Theorem \ref{T3B} guarantees that
Alice can securely transmit a random number with rank 2 per single use of the network.
In the former case, since $4-2-2=0$, Theorem \ref{T3} cannot guarantee that Alice  securely transmits her message to Bob.

As another example, we consider the circle type network given in Fig. \ref{F4}, in which, 
the nodes connect the next nodes and the nodes after the next.
Assume that we have pairs of secret keys in the circle type network of Fig. \ref{F4}.
We suppose that $v(1)$ intends to communicate with $v(8)$ securely.
They make the network as
$v(1) \to v(12)\to v(10)\to v(8)$,
$v(1) \to v(11)\to v(9)\to v(8)$,
$v(1) \to v(3)\to v(5)\to v(7)\to v(8)$,
$v(1) \to v(2)\to v(4)\to v(6)\to v(8)$, which connects $v(1)$ and $v(8)$ with rank 4.
When Eve occupies one intermediate node,
the ranks of $K_E$ and $H_B$ are one.
In the latter case, Theorem \ref{T3B} guarantees that
Alice in $v(1)$ can securely transmit a random number with rank 3 per single use of the network.
In the former case, 
Theorem \ref{T3} guarantees that Alice in $v(1)$ securely transmits her message to Bob with rank 2 per single use of the network.

\begin{figure}[htbp]
\begin{center}
%\scalebox{1}{
\includegraphics[scale=0.37]{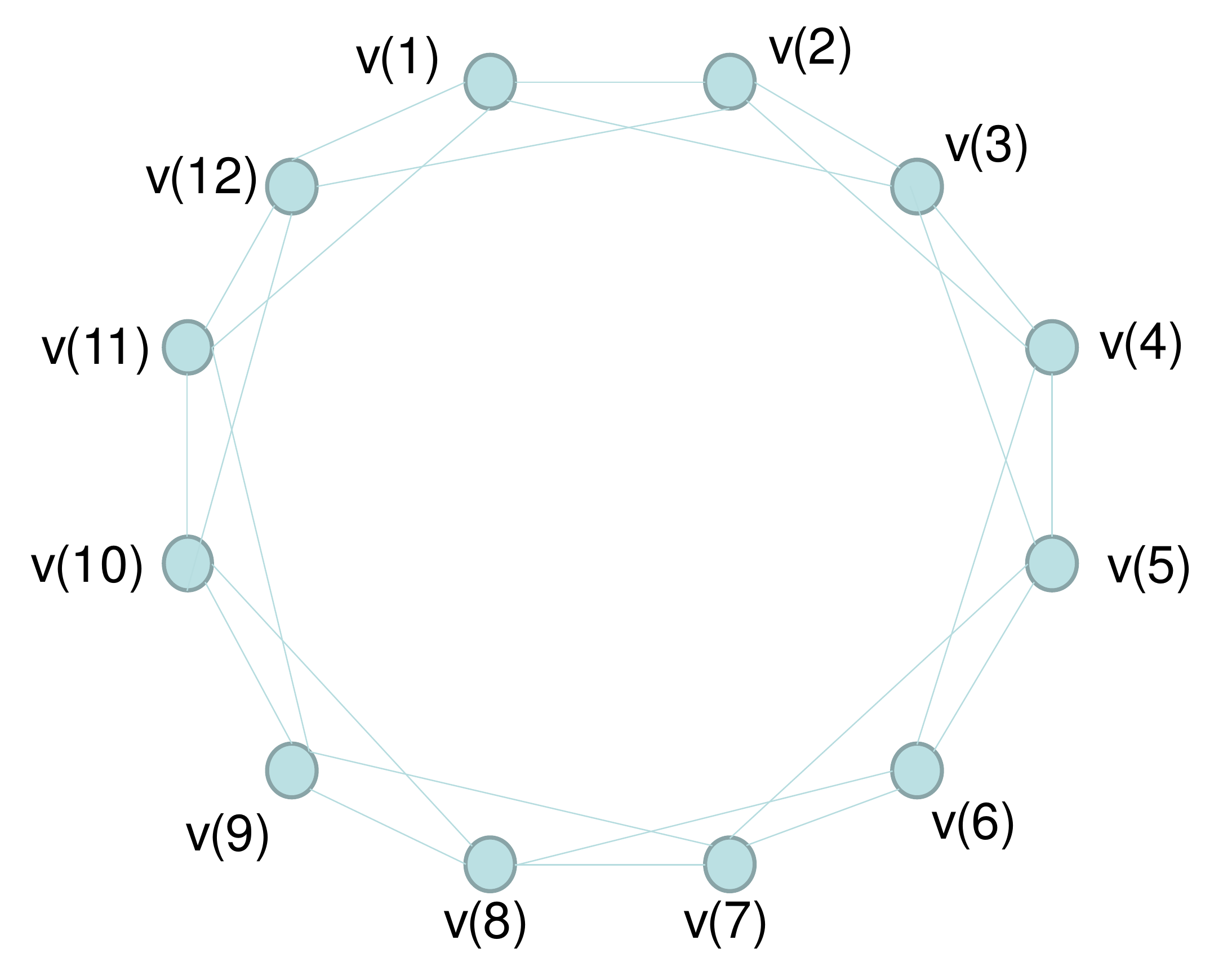}
\end{center}
\caption{Circle type network}
\Label{F4}
\end{figure}%

When Eve occupies two intermediate nodes, the ranks of $K_E$ and $H_B$ are at most two.
In the latter case, Theorem \ref{T3B} guarantees that
Alice in $v(1)$ can securely transmit a random number with rank 2 per single use of the network.
In the former case, 
Theorem \ref{T3} cannot guarantee that Alice in $v(1)$ securely transmits her message to Bob.
This method can be generalized to the case when Alice and Bob are $v(i)$ and $v(j)$
with $|i-j|\ge 2 (\bmod~ 12)$.
%That is, once we prepare the pairs of secret keys in the network of Fig. \ref{F4}, any pair can share secure keys under this assumption.

Indeed, this idea can be generalized to this circle type network even when the number of nodes is odd. 
Further, this network can be generalized to the following network of quantum key distribution with two integers $k>l>0$.
The set of nodes is given as $\{v(i)\}_{i=1}^k$,
and the set of edges is given as $\{(v(i), v(j)) \}_{|i-j| \le l (\bmod~ k)}$.
Now, we set Alice and Bob as $v(i)$ and $v(j)$ with $|i-j|\ge l (\bmod~ k)$.
Then, they can make $2l$ paths connecting $v(i)$ and $v(j)$ without duplication in the intermediate nodes.
Then, even when $2l-1$ nodes are occupied by Eve,
Alice and Bob can securely share a secret random number due to Theorem \ref{T3B}.

\section{Application to multiple multicast network} \Label{S10}
Before considering the multiple multicast network,
we consider the multicast case, in which a single user broadcasts his/her message to several receivers.
This setting can be applied to the multiple multicast network that can be realized by a combination of point-to-point quantum key distribution.
Remember the encoding and the decoding depend only on 
the integers $m_0,m_1,m_2,m_3,m_4$ in Theorem \ref{T3}.
Assume that we have $b$ receivers (receiver$1$, $\ldots$, receiver $b$).
The integers $m_2$ and $m_3$ do not depend on receiver $i$.
Since $m_0, m_1$ and $m_4$ depend on receiver $i$,
they are written as $m_0(i), m_1(i)$ and $m_4(i)$.
Since each receiver can add an extra dimension,
we can consider that 
the dimension of observed information by each receiver is $\tilde{m}_4:=\max_i m_4(i)$.
Then, we choose 
$\tilde{m}_0:= \min_{i} m_0(i)$
and $\tilde{m}_1:= \max_{i} m_1(i)$.
We apply Theorem \ref{T3} to the case with $m_0= \tilde{m}_0$,
$m_1= \tilde{m}_1$, $m_2, m_3$, and $m_4= \tilde{m}_4$.
When the encoder of the obtained non-local code is used in the sender
and the decoder of the obtained non-local code is used in all the receivers,
the rate $\tilde{m}_0-\tilde{m}_1-{m}_2 $.
 
Now, we proceed to the multiple multicast case, which contains multiple unicast case.
We consider how to apply our result to
a multiple multicast network with $a$ senders and $ \sum_{i=1}^a b_i$ receivers, in which,
the senders and the receivers are labeled as $i$ and $(i,j)$ with $1\le i\le a$ and $1\le j \le b_i $, respectively.
Sender $i$ intends to securely send the message to Receiver $(i,j)$.
That is, Sender $i$ wants to keep secrecy for Receiver $(i',j)$ with $i'\neq i$.
In the one-time use of the network, 
Sender $i$ sends $m_{3,i}$ symbols $\vec{X}_i$ of $\FF_q$ via $m_{3,i}$ channels
and
Receiver $(i,j)$ receives $m_{4,i,j}$ symbols $\vec{Y}_{i,j}$of $\FF_q$ via $m_{4,i,j}$ channels.
Without loss of generality, 
adding extra dimensions,
we can assume that 
$m_{4,i,j}$ does not depend on $j$ and is simplified to $m_{4,i}$ due to the following reason.
That is, when $m_{4,i,j} < \max_{j'} m_{4,i,j'}$,
we can consider that Receiver $(i,j)$ receives symbol $0$ via $ \max_{j'} m_{4,i,j'}- m_{4,i,j} $ channels.
If the codes in the network are designed perfectly, we have no cross-line nor no information leakage to unintended receivers.
In this section, we consider the case with a small amount of cross line and information leakage to unintended receivers due to errors on the design of the network.

We assume that these senders and receivers are connected via network composed of linear operations.
Then, using matrices $K_{i,j;i'}$, we can describe their relations as
\begin{align}
\vec{Y}_{i,j}= \sum_{i'=1}^{a} K_{i,j;i'} \vec{X}_{i'}.
\end{align}
While the senders transmit their information repeatedly, 
we assume that the coefficient matrices $K_{i,j;i'}$ do not change.
We assume that receivers do not  collude to recover the message from senders.
Now, we apply the model \eqref{F1n} and \eqref{F2n} to
 the secure communication transmission from Sender $i$ to Receiver $(i,j)$.
When we consider information leakage to Receiver $(i'',j'')$ with $i''\neq i$,
we substitute $\vec{X}_i$, $(\vec{X}_{i'})_{i'\neq i}$, $\vec{Y}_{i,j}$ and $\vec{Y}_{i'',j''}$
into $X$, $Z$, $Y_B$, and $Y_E$, respectively. 
We assume that the rank of information crossed from other senders is $m_{1,i,j}$
and the rank of leaked information to Receiver $(i'',j'')$ is $m_{2,i;i'',j''}$.
We introduce the maximum ranks 
$m_{0,i}:= \max_{j}\rank K_{i,j;i}$, 
$m_{1,i}:= \max_{j}m_{1,i,j}$, and  $m_{2,i} := \max_{i'',j''} m_{2,i;i'',j''}$.
Sender $i$ and Receiver $(i,j)$ are assumed to know only the integers
$m_{0,i}$, $m_{1,i}$, $m_{2,i}$, $m_{3,i}$, $m_{4,i}$ and have no other knowledge for the network structure.
We choose our non-local code by applying Theorem \ref{T2B} to
the case with 
$m_0=m_{0,i}$,
$m_1=m_{1,i}$,
$m_2=m_{2,i}$,
$m_3=m_{3,i}$, and 
$m_4=m_{4,i}$.
Since the non-local code does not depend on the choice of $j$ and $i'',j''$, this non-local code works well in this situation.

\section{Conclusion}\Label{SCon}
We have discussed how sequential error injection affects the information leaked to Eve.
As the result, we have shown that 
there is no improvement when 
the network is composed of linear operations.
However, when the network contains non-linear operations,
we have found a counterexample that improves the information obtained by Eve.
Moreover, as Theorem \ref{T3},
we have shown the achievability of the asymptotic rate 
$m_{0}-m_{1}-m_{2}$ for a linear network under the secrecy and robustness conditions
when the transmission rate from Alice to Bob is $m_{0}$,
the rate of noise injected by Eve is $m_{1}$,
and the rate of information leakage to Eve is $m_{2}$.
The converse part of this rate is an interesting open problem.
In addition, 
as Theorem \ref{T3B},
we have discussed the secrecy and the asymptotic transmission rate 
when Eve has a possibility to inject noise into the network.

Further, we have applied our results to network quantum key distribution.
Then, we have clarified what type of network will enable us to realize secure long-distance communication based on short-distance quantum key distribution.
However, when we consider only the case given in Fig. \ref{F4},
we can employ a classical (non-quantum) secret sharing protocol \cite{Shamir} instead of network coding 
because all of the communications of this case are routing.
In particular, cheater-identifiable secret sharing against rushing cheaters \cite{PW91,RAX,XMT1,XMT2,HK}
enables us to share secure keys without using public channels or prior shared randomness.

In this way, this paper has discussed the application of secure network coding
to a network model whose communications on the edges 
are realized by quantum key distribution.
Replacing the role of quantum key distribution by physical layer security,
we can consider a secure network based on physical layer security.
In particular, 
we can use secure wireless communication \cite{H17,LPS,BC11,SC,Trappe,Zeng,WX,RealisticChannel,IFS1,IFS2}
as a typical form of physical layer security,
which provides us with a secure network based on secure wireless communication.
A crucial weak point of physical layer security is the possibility that the eavesdropper might break the assumption of the model.
Such an attack might be realized in the following cases.
(1) The eavesdropper concentrates his/her resources on one point.
(2) The eavesdropper luckily encounters a situation that the assumption is broken.
When we combine physical layer security and secure network coding in the above way,
to eavesdrop our information, 
the eavesdropper needs to break the model of physical layer security in multiple communication channels.
In case (1), to realize this condition, the eavesdropper has to 
distribute his/her resources, which increases the difficulty of eavesdropping.
For case (2), 
the eavesdropper must be lucky in multiple communication channels,
and this probability is very small.
In this way, this kind of combination is particularly useful.
%Hence, we can expect their developments as another application of our results.
%This kind of application expresses a wide applicability of our result. 

\section*{Acknowledgments}
MH and NC are very grateful to Prof. Masaki Owari, Dr. Go Kato, Dr. Wangmei Guo,
and Mr. Seunghoan Song
% and Professor Ning Cai 
for helpful discussions and comments.

\appendices

\section{Proof of Theorem \ref{T1}}\Label{ApB}
We define two random variables $Y_{E,p}^n:=K_E X^n$ and 
$Y_{E,a}^n:=K_E X^n+ H_E Z^n$.
Eve can simulate her outcome $Y_{E,a}^n$ under the strategy $\alpha$
from $Y_{E,p}^n$, $H_E$, and $Z^n$.
So, we have 
$I(M;Y_{E,p}^n) \ge I(M;Y_{E,a}^n,Z^n)=I(M;Y_{E,a}^n)$.
Conversely, since $Y_{E,p}^n$ is given as a function of $Y_{E,a}^n$, $Z^n$, and $H_E$,
we have the opposite inequality.
%We can show the second equation in the same way.

\section{Proof of Proposition \ref{T2}}\Label{Ap1}

To show Proposition \ref{T2}, 
we regard any element of the finite field $\FF_q$ as
an element of a $t$-dimensional algebraic extension $\FF_{q'}$ 
of the finite field $\FF_q$, where $q'=q^t$.
The matrices $K_B,H_B,K_E,H_E$ on $\FF_q$ can be regarded as matrices on $\FF_{q'}$.
By choosing $l_n:=t n$, the matrices $X^{l_n}$, $Y_B^{l_n}$, $Y_E^{l_n}$, $Z^{l_n}$ on $\FF_q$ 
are converted to matrices ${X'}^{n}$, ${Y_B'}^{n}$, ${Y_E'}^{n}$, ${Z'}^{n}$ on $\FF_{q'}$,
which also satisfy \eqref{F1n} and \eqref{F2n} by regarding 
the same matrices $K_B,H_B,K_E,H_E$ on $\FF_q$ as matrices on $\FF_{q'}$.
Then, the following proposition is known.

\begin{proposition}[\cite{JL,JLKHKM,Jaggi2008,Yao2014}]
\Label{T2B}
We assume the following two conditions for $m_2, m_1, m_0$ and a sequence of prime power $q_{n}'$.
The inequality $m_{2}+m_{1}< m_{0}$ holds.
The size $q_{n}'$ of the finite field increases such that 
$\frac{q_{n}'}{{n}^{m_0+1}} \to \infty$. %(e.g., $q_{n}'={n}^{m_{0}+2}$.)
%The size $q'$ of the finite field increases such that $q'=O({n}^{m_{0}}) $, (e.g., $q'={n}^{m_{0}+1}$, i.e., $t=\lceil \frac{ (m_{0}+1)\log n}{\log q}\rceil$.).
%(2) Alice and Bob share secure random numbers whose length is asymptotically negligible in block-length $n$. 
Then, there exists a sequence of non-local codes $\Phi_{n}$ 
of block-length $n$ on finite field $\FF_{q_{n}'}$
%with input ${X'}^{n}$ and output ${Y_B'}^{n}$
whose message set is $\FF_{q_{n}'}^{k_{n}'}$ such that
\begin{align}
&\lim_{n \to \infty} \frac{k_{n}'}{n} = m_{0}-m_{1}\Label{eqH4-27C}
\\
&\lim_{n \to \infty} \max_{ \bm{K},\bm{H}} \max_{\alpha^n} 
P_e[\Phi_{n}, \bm{K},\bm{H},\alpha^n]=0 \Label{eqH4-27},
\end{align}
where the maximum is taken in the same way as with Proposition \ref{T2}.
\end{proposition}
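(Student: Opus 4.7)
The proof plan is to use random linear coding over the large extension field $\FF_{q_n'}$, combined with a rank-metric decoder and a short hash-based verification step. The intuition is that $K_B X^n$ occupies an $m_0$-dimensional column subspace of Bob's received signal while $H_B Z^n$ contributes a perturbation of column rank at most $m_1$, leaving roughly $(m_0-m_1)n$ degrees of freedom for the message, which matches the target rate \eqref{eqH4-27C}.

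First, I would let Alice pick a random linear encoder $\phi_n: \FF_{q_n'}^{k_n'} \to \FF_{q_n'}^{m_3 \times n}$ with $k_n' = \lfloor (m_0-m_1)n \rfloor - O(\log_{q_n'} n)$, and append to each message a short random hash of length $O(\log_{q_n'} n)$ serving as a signature embedded inside the codeword. Bob would decode in two stages: rank-metric list decoding to produce the small set of candidate codewords $\hat X^n$ satisfying $\rank(Y_B^n - K_B \hat X^n) \le m_1$, followed by hash verification to pick out the unique true candidate. A standard rank-metric counting argument bounds the list size polynomially in $q_n'$, and a random-hash argument shows that any wrong candidate passes verification only with probability $q_n'^{-\Omega(1)}$.

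Second, the error analysis bounds the expected error probability over the random codebook. By counting rank-$\le m_1$ matrices inside $\im K_B$, the probability that a pair of distinct codewords is confused by a rank-$m_1$ perturbation decays as $q_n'^{-\Omega(n)}$ for our choice of $k_n'$, and union bounding over $q_n'^{2 k_n'}$ message pairs still yields a vanishing contribution. The maximum over $(K_B, H_B, K_E, H_E)$ contributes only a factor $q_n'^{O(1)}$ (since the matrix dimensions are fixed), and the effective adversarial strategies are parameterized by the column spaces of $H_B, K_E, H_E$ together with the causal structure encoded in \eqref{E14}; combined with the growth assumption $q_n'/n^{m_0+1} \to \infty$, the averaged error vanishes, and standard derandomization extracts a deterministic code satisfying \eqref{eqH4-27}.

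The main obstacle is the adaptivity of Eve's strategy: since $Z^n$ is a function of her observation $Y_E^n$, which itself depends on $X^n$, the injected noise $H_B Z^n$ is statistically coupled to the transmitted codeword, and independent-noise techniques do not apply directly. The resolution is to observe that the decoder's rank criterion depends only on the image of $H_B Z^n$, which always lies in the fixed column space of $H_B$ of dimension at most $m_1$; hence, after conditioning on this column space, Eve's effective attack reduces to choosing a worst-case rank-$m_1$ noise matrix in a bounded family, which the random-coding calculation then absorbs through a union bound that the field-size condition still accommodates.
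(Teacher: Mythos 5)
There is a genuine gap, and it sits exactly where the hypothesis $m_2+m_1<m_0$ has to be used. Your scheme hinges on a ``short random hash \ldots embedded inside the codeword'' that Bob uses to prune the rank-metric list. For that verification step to defeat Eve, the hash key (and the hash value itself) must be simultaneously (i) unknown to Eve, who observes $m_2$ dimensions of the transmission, and (ii) delivered to Bob intact despite her rank-$m_1$ injection. If the hash rides inside the same codeword, neither property is automatic: Eve can learn the key from $Y_E^n$ and then tailor $Z^n$ so that a wrong list element passes verification, or simply corrupt the hash coordinates. This is why the paper splits the argument into Proposition \ref{T2C} (rate $m_0-m_1$ \emph{assuming} an asymptotically negligible secret side channel from Alice to Bob) and Proposition \ref{T2BL} (constructing that secret, robust, negligible-rate side channel, which is precisely where $m_{2}+m_{1}<m_{0}$ is needed). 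Your argument never invokes $m_2$ in any essential way, which is a structural warning sign: against a rank-$m_1$ injector with no secrecy constraint, the achievable rate is only $m_0-2m_1$ (the Singleton-type barrier), so some use of the eavesdropping bound is unavoidable to reach $m_0-m_1$.

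Relatedly, your resolution of the adaptivity issue does not close the argument. Conditioning on $\im H_B$ reduces Eve's injection to a choice of an $m_1\times n$ matrix $\hat Z^n$, but that family has $q_n'^{\,m_1 n}$ elements --- exponential in $n$, not ``bounded'' --- and a union bound over it, combined with your union bound over $q_n'^{\,2k_n'}$ message pairs at $k_n'\approx(m_0-m_1)n$, does not vanish; it reimposes the $m_0-2m_1$ rate. The correct escape (used both in the paper's proof and in the Jaggi et al.\ line it builds on) is that the verification seed is statistically independent of Eve's view, so the failure probability of verification is bounded for each fixed $\hat Z^n$ by a union only over a set of size $q_n'^{\,m_0}$ (in the paper, over vectors $z\in\FF_{q_n'}^{m_0}$ via Lemma \ref{LJa}), with no union over Eve's noise choices. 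Incidentally, the paper's own proof of Proposition \ref{T2C} deliberately avoids list decoding altogether: Bob solves linearly for a matrix $U_3$ consistent with the secretly shared hash $U_2=MU_1$, which is why $m=m_0+1$ seeds suffice rather than the $(m_0-m_1)m_0+1$ needed in the list-decoding approach you sketch. To repair your proposal you would need to (a) add a separate negligible-rate secret and robust sub-code carrying the hash key and value, proved secure under $m_2+m_1<m_0$, and (b) redo the error analysis so that the union bound runs over the hash-seed failure events rather than over Eve's noise matrices.
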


The optimality of the rate $m_0-m_{1}$ was also shown 
under the condition \eqref{eqH4-27} in \cite[Sections VI \& VII]{Yao2014}. %\cite{CN11,CN11b}.
By choosing $t_{n}=\lceil \frac{ (m_0+1)\log n}{\log q}\rceil$
and $l_{n}:= t_{n} {n}$,
Proposition \ref{T2B} implies Proposition \ref{T2}.
Hence, we need to explain how to show Proposition \ref{T2B}. 

Combining the results in \cite{JL,JLKHKM,Jaggi2008,Yao2014},
we can construct a sequence of non-local codes to satisfy \eqref{eqH4-27C} and \eqref{eqH4-27}.
More precisely, the papers \cite[Section IX]{JLKHKM}\cite[Section VIII]{Jaggi2008}
constructed a sequence of non-local codes to satisfy \eqref{eqH4-27C} and \eqref{eqH4-27}
under  the condition $m_{2}+2 m_{1}< m_{0}$.
This is because the condition $m_{2}+2 m_{1}< m_{0}$ is stronger than 
the condition $m_{2}+m_{1}< m_{0}$, which is the assumption of Proposition \ref{T2B}.
To show \eqref{eqH4-27} under the weaker condition $m_{2}+m_{1}< m_{0}$, 
the papers 
by Jaggi, Langberg, Katti, Ho, Katabi, M\'{e}dard, and Effros
\cite[Section VII]{JLKHKM}\cite[Section VI]{Jaggi2008}\cite[Section IV-C]{Yao2014}
constructed a sequence of non-local codes to satisfy \eqref{eqH4-27C} and \eqref{eqH4-27},
when Alice can send Bob secret information whose size is asymptotically negligible 
in comparison with $n$, in the following way.

\begin{proposition}[\cite{JL,JLKHKM,Jaggi2008}]
\Label{T2C}
We assume the following three conditions.
The inequality $m_0>m_1$ holds.
Alice can send Bob secret information whose size is asymptotically negligible 
in comparison with $n$.
The size $q_{n}'$ of the finite field increases such that 
$\frac{q_{n}'}{{n}^{m_0+1}} \to \infty$.
%(e.g., $q'={n}^{m_{0}+2}$.).
%, i.e., $t=\lceil \frac{ (m_{0}+1)\log n}{\log q}\rceil$.).
%(2) Alice and Bob share secure random numbers whose length is asymptotically negligible in block-length $n$. 
Then, there exists a sequence of non-local codes $\Phi_{n}$ 
of block-length $n$ on finite field $\FF_{q_{n}'}$
%with input ${X'}^{n}$ and output ${Y_B'}^{n}$
whose message set is $\FF_{q_{n}'}^{k_{n}'}$ 
such that the relations \eqref{eqH4-27} and \eqref{eqH4-27C} hold. 
\end{proposition}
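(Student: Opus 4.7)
The plan is to leverage Alice and Bob's asymptotically negligible shared secret as a cryptographic authentication key, thereby converting the active-adversary decoding problem into a list-decoding problem followed by a cheap verification step. Let $S$ denote the shared secret, taking $|S| = O(\log n)$ symbols of $\FF_{q_n'}$. I would construct the encoder in three layers: (i) Alice augments her message $M \in \FF_{q_n'}^{k_n'}$ with an authentication tag $\tau = h_S(M) \in \FF_{q_n'}^{t_n}$, where $\{h_S\}$ is a strongly universal hash family and $t_n = \omega(1)$ with $t_n/n \to 0$; (ii) she applies a random linear encoding $X = G \cdot (M, \tau, L)^{\mathsf T}$, where $L$ is local randomness and $G$ is a randomly chosen generator matrix; (iii) the leading $m_0$ columns of $X$ are overwritten with a fixed full-rank training pattern so Bob can identify the channel matrix $K_B$.

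For decoding, Bob first uses the training columns of $Y_B$ to estimate $K_B$ up to an $m_1$-dimensional ambiguity caused by Eve's injection on the training block. He then list-decodes all candidate triples $(\hat M, \hat\tau, \hat L)$ consistent with $Y_B = K_B G (\hat M, \hat\tau, \hat L)^{\mathsf T} + H_B \hat Z$ for some $\hat Z$ of rank at most $m_1$; a dimension count shows the list has size at most $q_n'^{O(n m_1)}$. Bob then selects the unique $\hat M$ on the list whose tag verifies, i.e.\ satisfying $h_S(\hat M) = \hat\tau$. Since $|\tau|, |L|, |S|$ are all $o(n)$, the rate satisfies $k_n'/n \to m_0 - m_1$, matching \eqref{eqH4-27C}, and the field-size condition $q_n'/n^{m_0+1} \to \infty$ together with a random-coding argument over $G$ ensures that the correct triple lies in the list with probability approaching $1$ uniformly over admissible $(K_B, H_B)$.

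The error-probability analysis then reduces to bounding the probability that a spurious candidate passes the authentication check. The key observation is that $S$ is information-theoretically independent of everything Eve observes; hence by the universality of $h$, any given incorrect $\hat M$ on the list satisfies $h_S(\hat M) = \hat\tau$ with probability at most $q_n'^{-t_n}$, and a union bound over the list yields a vanishing contribution provided $t_n$ grows faster than $n m_1 \log n / \log q_n'$, which is compatible with $t_n/n \to 0$. The main obstacle I expect is handling the adaptivity of $\alpha^n$, since $Z^n$ may depend on Eve's wiretap and hence on $\tau$; the crucial point is that a successful adaptive attack must still produce a spurious $(\hat M, \hat\tau)$ with a valid hash, and this remains a one-shot universal-hash collision event regardless of how $Z^n$ was chosen. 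This reduction brings the analysis back to the non-adaptive setting already treated in \cite{JL,JLKHKM,Jaggi2008}, completing the construction.
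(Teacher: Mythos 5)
Your overall architecture (list-decode all candidates consistent with a rank-$\le m_1$ injection, then disambiguate with a hash keyed by the shared secret) is in the spirit of the original Jaggi et al.\ constructions, but the quantitative core of your argument fails. You correctly estimate the list size as $q_n'^{O(nm_1)}$, so a union bound with a per-candidate forgery probability of $q_n'^{-t_n}$ forces $t_n=\Omega(n m_1)$ \emph{symbols}; your stated requirement ``$t_n$ grows faster than $n m_1 \log n/\log q_n'$'' is only $o(n)$ if $\log q_n'=\omega(\log n)$, whereas the proposition (and the way it is invoked to prove Proposition \ref{T2}, where $q_n'\approx n^{m_0+1}$, i.e.\ $\log q_n'=\Theta(\log n)$) must work with $\log q_n'=\Theta(\log n)$. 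In that regime your tag length is $\Theta(n m_1)$, which is not asymptotically negligible and eats into the rate, so the claim ``compatible with $t_n/n\to 0$'' is arithmetically wrong for any $m_1\ge 1$. A generic strongly universal MAC plus a union bound over the full list cannot work here; one must exploit linearity so that the union bound runs over a set of size $q'^{m_0}$ (directions $z$ in which rank can collapse) rather than over the exponentially large list.

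That is exactly what the paper's proof does, and it explicitly avoids list decoding. The shared secret consists of $m=m_0+1$ random evaluation points $V_1,\dots,V_m$ and the $(m_0-m_1)\times m$ matrix $U_2=MU_1$, where $U_1$ is the Vandermonde-type matrix $U_{1;i,j}=(V_j)^i$; Bob directly solves $U_3\bar Y_B^n U_1=U_2$ by Gaussian elimination and outputs $U_3\bar Y_B^n$. Correctness reduces to two events: (F1), an image/kernel transversality condition handled by the random invertible matrix $U_0$ (your random $G$ plays the analogous role), and (F2), the condition $\rank\bigl[\begin{smallmatrix}M\\ \hat Z^n\end{smallmatrix}\bigr]U_1=\rank\bigl[\begin{smallmatrix}M\\ \hat Z^n\end{smallmatrix}\bigr]$, whose failure probability is bounded by $q'^{m_0}(n/q')^{m}=n^{m_0+1}/q'\to 0$ via Lemma \ref{LJa} --- note the union over only $q'^{m_0}$ vectors $z$. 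This also removes your need for a training block to estimate $K_B$ (which Eve can corrupt, leaving an ambiguity you do not resolve). To repair your proof you would have to replace the generic MAC by a linear hash of this Vandermonde type and redo the union bound over the kernel directions, at which point you have reproduced the paper's argument.
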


Then, under the weaker condition $m_{2}+ m_{1}< m_{0}$,
as the following proposition, the papers \cite[Section III]{JL}\cite[Section V]{Yao2014} provide a protocol for secure transmission of random variables with an asymptotically negligible length $k_{n}$ in comparison with $n$, which is the requirement in Proposition \ref{T2C}.

%\begin{proposition}[\protect{\cite[Section III]{JL}}]\Label{T2BL}
\begin{proposition}[\protect{\cite[Section III]{JL},\cite[Section V]{Yao2014}}]\Label{T2BL}
We assume the inequality $m_{2}+m_{1}< m_{0}$.
Then, there exists a sequence of non-local codes $\Phi_{n}$ of block-length $n$ 
%with input ${X'}^{n}$ and output ${Y_B'}^{n}$
whose message set is $\FF_{q}^{k_{n}}$ such that
\begin{align}
&\lim_{n \to \infty} k_{n}= \infty \Label{eqH4-27CY}
\\
&\lim_{n \to \infty} \max_{ \bm{K},\bm{H}} \max_{\alpha^n} 
P_e[\Phi_{n}, \bm{K},\bm{H},\alpha^n]=0 \Label{eqH4-27Y},\\
&\lim_{n \to \infty} \max_{ \bm{K},\bm{H}} %\max_{\alpha^n} 
I(M;Y_E^{n})[\Phi_{n},\bm{K},\bm{H},0]=0, \Label{H3-182Y}
\end{align}
where the maximum is taken in the same way as Proposition \ref{T2B}.
\end{proposition}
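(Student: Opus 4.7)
The plan is to exploit the slack in the hypothesis $m_2+m_1<m_0$ by operating at sublinear rate (for instance, $k_n=\lfloor\sqrt n\rfloor$, so that $k_n\to\infty$ but $k_n/n\to 0$), which allows both secrecy scramble and error-correction redundancy to fit inside a single block of length $n$ with room to spare. Since \eqref{H3-182Y} requires secrecy only for the passive attack ($\alpha^n=0$), while \eqref{eqH4-27Y} requires robustness against active attack, I would treat the two goals as essentially independent design problems and combine them with $o(n)$ total overhead that is absorbed by the $\Theta(n)$ slack $n(m_0-m_1-m_2)$.

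For the passive secrecy condition \eqref{H3-182Y}, I would mirror the universal2 hashing argument used in the proof of Theorem \ref{T3}. Fix a pre-encoded dimension $k_n+m_2 n+\lceil\sqrt n\rceil$, pair the $k_n$-symbol message with uniform scramble randomness of length $m_2 n+\lceil\sqrt n\rceil$, and extract the actual message symbols through a random linear universal2 surjective hash $f_R$. The conditional R\'enyi entropy bound \eqref{F15} combined with Proposition \ref{T4} gives a passive-attack leakage of at most $q^{-\lceil\sqrt n\rceil+c}/s$ for each fixed $K_E$ of rank $m_2$, and a Markov union bound over the at most $q^{m_6 m_3}$ equivalence classes of such $K_E$ produces a single deterministic hash achieving vanishing worst-case leakage, exactly as in \eqref{F20a}--\eqref{F20}. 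For the active robustness condition \eqref{eqH4-27Y}, I would insert, between Alice's pre-encoding and the network, a rank-metric (or lifted subspace) error-correcting layer in the style of \cite{JLKHKM,Jaggi2008,JL,Yao2014}. Because $m_0>m_1$ and $\rank K_B=m_0$, there are $m_0-m_1>0$ correctable dimensions against any rank-$m_1$ injection through $H_B$; at sublinear rate, Alice can afford both a rank-metric encoding and a short random linear authenticator transmitted alongside the message, so that Bob first list-decodes to a polynomial-size candidate list and then selects the unique message consistent with the authenticator.

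The main obstacle is handling the adaptive nature of $\alpha^n$ in the robustness analysis: Eve's injection $Z^n$ may depend on her observations $Y_E^n$ via the causality constraint \eqref{E14}, so one cannot directly invoke Theorem \ref{T1} (which controls leakage for linear codes, not error probability). The resolution is to note that, once the network matrices $K_B,H_B,K_E,H_E$ are fixed, the realized injection $Z^n$ lies in $\FF_q^{m_1\times n}$ regardless of which strategy $\alpha^n$ produced it, so the relevant union bound is only single-exponential ($q^{m_1 n}$ possibilities) and is dominated by the super-exponentially decaying list-decoding plus authenticator failure probability available at rate $k_n/n\to 0$. The authenticator in particular cannot be anticipated by Eve because she must commit to the corresponding entries of $Z^n$ before observing the corresponding entries of $Y_E^n$, a property encoded in \eqref{E14}; this is the step that genuinely uses the active/passive distinction and that mirrors the secret-key-assisted construction in \cite[Section III]{JL} and \cite[Section V]{Yao2014}, now bootstrapped without any initial shared secret precisely because we only require $k_n=o(n)$.
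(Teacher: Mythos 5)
The paper does not actually prove Proposition \ref{T2BL}: it is imported wholesale from \cite[Section III]{JL} and \cite[Section V]{Yao2014}, and the appendix only supplies an alternative proof of Proposition \ref{T2C} (the secret-key-assisted code), which is then \emph{combined} with \ref{T2BL}. So your attempt is being measured against the cited literature rather than against an in-paper argument. Measured that way, it has two genuine gaps, both in the robustness half.

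First, your key claim that the authenticator ``cannot be anticipated by Eve because she must commit to the corresponding entries of $Z^n$ before observing the corresponding entries of $Y_E^n$'' inverts the causality model. Condition \eqref{E14} states $H_{E;j,i}=0$ when $\eta(i)\ge\zeta(j)$, i.e.\ it only prevents Eve's injections from feeding back into her own \emph{earlier} observations; the model explicitly allows her to observe on early edges and then choose her injections on later edges as a function of what she saw. An authenticator ``transmitted alongside the message'' through the same network is therefore visible (up to rank $m_2$) to Eve before she finishes injecting, and timing gives it no protection. In \cite{JL} and \cite{Yao2014} the hash/authentication key is protected by \emph{secrecy} --- it lives in the $m_0-m_1-m_2\ge 1$ dimensions that reach Bob uncorrupted and unobserved --- and bootstrapping that secret without any prior shared randomness is precisely the content of this proposition; your construction quietly assumes the secret channel it is supposed to build. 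Second, your union bound does not close: you bound the number of realized injections by $q^{m_1 n}$ and claim this is ``dominated by the super-exponentially decaying list-decoding plus authenticator failure probability,'' but an authenticator of length $o(n)$ has forgery probability only $q^{-o(n)}$ per candidate, which cannot absorb a $q^{m_1 n}$ union bound. The correct argument avoids the union bound over $Z^n$ altogether by exploiting the independence of the (secret, uniform) key from Eve's view, so that only the polynomially many list candidates need to be tested. The passive-secrecy half of your proposal (universal$_2$ hashing plus the Markov/union argument over the $q^{m_6 m_3}$ matrices $K_E$) is fine and does mirror the paper's proof of Theorem \ref{T3}, but it is the easy half.
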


Therefore, attaching the protocol of Proposition \ref{T2BL} to the non-local codes given in 
Proposition \ref{T2C},
we obtain \eqref{eqH4-27} under the weaker condition $m_{2}+m_{1}< m_{0}$.
However, their proof of Proposition \ref{T2C}
is very hard to read because it omits the detail derivation. 
In the following, we give an alternative proof of Proposition \ref{T2C}.

Before our proof of Proposition \ref{T2C},
we prepare two lemmas.
The first lemma can be easily shown by the discussion of linear algebra.
In the following discussion, we simplify $q_{n}'$ to $q'$.

\begin{lemma}\Label{L4-27}
For integers $a_0 \le a_1+a_2,a_1',a_2'$, 
we fix an $a_1$-dimensional subspace $W_1 \subset \FF_{q'}^{a_1'}$ and 
an $a_2$-dimensional subspace $W_2 \subset \FF_{q'}^{a_2'}$.
We assume the following two conditions.
\begin{description}
\item[(E1)]
An $a_0 \times a_1'$ matrix $A_1$
and an $a_0 \times a_2'$ matrix $A_2$
satisfy 
\begin{align}
\Ker A_1|_{W_1} &=\{0\}, \\
%\im [A_1 ~A_2]|_{W_1\oplus W_2 } &= \FF_{q'}^{a_0},\Label{ERT} \\
\im A_1 \cap \im A_2 &= \{ 0 \},
\end{align}
where $\im (f)$ denotes the image of the function $f$.
\item[(E2)]
We consider a subspace $W_3 \subset W_1 \oplus W_2$.
For vectors
$x_1, \ldots, x_b \in W_1 $
and
$y_1, \ldots, y_b \in W_2$ with $b \ge a_1+a_2$,
$(x_1,y_1), \ldots, (x_b ,y_b) $ span $W_3$.
\end{description}
Then, we have the following statements.
\begin{description}
\item[(E3)]
There exists an $a_1' \times a_0 $ matrix $A_3$ such that
$ A_3 (A_1 x_i+ A_2 y_i)=x_i$ for $i=1, \ldots, b$, i.e., 
\begin{align}
&A_3 
\left[A_1~ A_2 \right]
\left[
\begin{array}{cccc}
x_1 & x_2 & \cdots & x_b \\
y_1 & y_2 & \cdots & y_b
\end{array}
\right]
\nonumber \\
=&
\left[
\begin{array}{cccc}
x_1 & x_2 & \cdots & x_b 
\end{array}
\right].
\end{align}
\item[(E4)]
The above matrix $A_3$ satisfies the relation
\begin{align}
 A_3 (A_1 x+ A_2 y)=x \Label{4-27F}
\end{align}
for any $(x,y) \in W_3$.
\end{description}
\end{lemma}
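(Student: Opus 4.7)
The statement is a purely linear-algebraic assertion, and I would approach it by constructing $A_3$ explicitly as a linear extension of a left-inverse map defined on a suitable subspace of $\FF_{q'}^{a_0}$. The intuition behind (E1) is that its two clauses are exactly what is needed in order to recover $x$ from the sum $A_1 x + A_2 y$ when $(x,y) \in W_1 \oplus W_2$: injectivity of $A_1|_{W_1}$ ensures that $x$ is determined by $A_1 x$, and the direct-sum condition $\im A_1 \cap \im A_2 = \{0\}$ ensures that the two summands $A_1 x$ and $A_2 y$ can be isolated from their sum. Once such a recovery map is built on a subspace of $\FF_{q'}^{a_0}$, extending it to all of $\FF_{q'}^{a_0}$ gives the desired $A_3$.

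Concretely, I would first observe that because $A_1|_{W_1}$ is injective, $A_1(W_1)$ is an $a_1$-dimensional subspace of $\FF_{q'}^{a_0}$, and by the direct-sum hypothesis the internal sum $U := A_1(W_1) + \im A_2$ is direct. Next, I would define a linear map $T : U \to \FF_{q'}^{a_1'}$ by $T(A_1 x + A_2 y) := x$ for $x \in W_1$ and $y \in \FF_{q'}^{a_2'}$, and then extend $T$ to a linear map $A_3 : \FF_{q'}^{a_0} \to \FF_{q'}^{a_1'}$, for instance by declaring $A_3 = 0$ on any fixed complement of $U$. For every $(x,y) \in W_1 \oplus W_2$ one then has $A_1 x + A_2 y \in U$ and $A_3(A_1 x + A_2 y) = x$ by construction. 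Specializing to $(x,y) = (x_i, y_i)$ yields (E3), while specializing to an arbitrary $(x,y) \in W_3 \subset W_1 \oplus W_2$ yields (E4).

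The one step that requires care is the well-definedness of $T$: if $A_1 x + A_2 y = A_1 x' + A_2 y'$ with $x, x' \in W_1$, then $A_1(x - x') = A_2(y' - y) \in \im A_1 \cap \im A_2 = \{0\}$, whence $A_1(x - x') = 0$, and then $x = x'$ by injectivity of $A_1|_{W_1}$. Beyond this, no step is substantive; in particular, the hypothesis $b \ge a_1 + a_2$ and the spanning statement in (E2) do not enter the construction of $A_3$ at all, because $A_3$ is built to invert $(x,y) \mapsto A_1 x + A_2 y$ on the whole of $W_1 \oplus W_2$ and therefore automatically works on any subspace $W_3$ of it. The bound $a_0 \le a_1 + a_2$ in the hypotheses is likewise not used here; it appears to be a convenience assumption motivated by the intended application rather than a necessary ingredient of the construction.
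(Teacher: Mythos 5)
Your construction of $A_3$ is correct, and it is essentially the paper's own: the authors take a left inverse $A_4$ of $A_1|_{W_1}$ (defined on $\im A_1$) and compose it with the projection $A_5$ of $\FF_{q'}^{a_0}$ onto $\im A_1$ that annihilates $\im A_2$, which is well defined precisely because of (E1); this is the same map as your $T$ extended by zero off $U = A_1(W_1)\oplus \im A_2$. Your well-definedness check is the one nontrivial point of the existence part, and you carry it out correctly. You are also right that the bounds $a_0\le a_1+a_2$ and $b\ge a_1+a_2$ play no role in the argument.

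Where you diverge from the paper, and where your closing remark is misleading, is the role of (E2). You establish that \emph{your particular} $A_3$ satisfies \eqref{4-27F} on all of $W_1\oplus W_2$, hence on $W_3$. The paper instead proves (E3) by the construction and then deduces (E4) from (E2) by linearity: since the vectors $(x_1,y_1),\ldots,(x_b,y_b)$ span $W_3$, \emph{any} matrix $A_3$ satisfying the finitely many equations in (E3) automatically satisfies \eqref{4-27F} on all of $W_3$. That stronger reading is the one the application needs: in the proof of Proposition \ref{T2C}, Bob obtains $U_3$ by Gaussian elimination as \emph{some} solution of the linear system \eqref{4-27Y} (i.e., of the condition in (E3)), not as the explicitly constructed left inverse, and the correctness of his decoding rests on (E4) holding for that $U_3$. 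So (E2) is not a superfluous hypothesis; it is exactly what transfers the identity from the $b$ sample vectors to the whole of $W_3$ for an arbitrary solution of the system. As a proof of the lemma as literally worded your argument stands, but dropping (E2) would leave the lemma too weak to support the step in which it is invoked.
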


\begin{proof}
Due to condition (E1), 
we choose a map $A_4$ from $\im A_1$ to $W_1$
such that $A_4 A_1$ is the identify on $W_1$.
Since $W_1$ is included in $\FF_{q'}^{a_1'}$,
$A_4$ can be regarded as a map from $\im A_1$ to $\FF_{q'}^{a_1'}$.
Then, we choose a projection $A_5$ from $\FF_{q'}^{a_0}$
to $\im A_1$ such that $A_5 x=0$ for $x \in \im A_2$.
Therefore, $A_3:= A_5 A_4$ satisfies the condition of (E3).
Further, (E2) guarantees (E4).
\end{proof}

%Therefore, it is impossible to find an $a_1 \times a_0 $ matrix $A_3$ satisfying \eqref{4-27F} unless condition (E2) holds.

\begin{lemma}[\protect{\cite[Section VII]{JLKHKM}\cite[Claim 5]{Jaggi2008}}]\Label{LJa}
We independently choose $m$ random variables 
$V_1, \ldots, V_{m}$ 
subject to the uniform distribution on $\FF_{q'} $.
We define the $n \times m$ matrix $U_1$
as $U_{1;i,j}:= (V_j)^i$ with $i=1, \ldots, n$ and $j= 1,\ldots, m$.
Then,  
\begin{align}
\Pr \{ x U_1=x' U_1
\} \le \Big(\frac{n}{q'}\Big)^m
\end{align}
for any $x\neq x'\in \FF_{q'}^n$.
\end{lemma}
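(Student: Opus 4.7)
The plan is to reduce the bound to a Schwartz–Zippel-style polynomial-roots count. First, I would set $y := x - x' \in \FF_{q'}^n \setminus \{0\}$ and observe that the event $xU_1 = x'U_1$ coincides with the event $yU_1 = 0$, so it suffices to show $\Pr\{yU_1 = 0\} \le (n/q')^m$ for every nonzero $y$.

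Next, I would unpack the $j$-th coordinate of $yU_1$: by definition of $U_1$, it equals $\sum_{i=1}^n y_i V_j^i = p(V_j)$, where $p(t) := \sum_{i=1}^n y_i t^i \in \FF_{q'}[t]$ is the same univariate polynomial for every $j$. Since $y$ has at least one nonzero entry $y_{i_0}$ with $1 \le i_0 \le n$, the polynomial $p$ is nonzero of degree at most $n$, and therefore has at most $n$ roots in $\FF_{q'}$. For each $j$, because $V_j$ is uniformly distributed on $\FF_{q'}$, we get $\Pr\{p(V_j) = 0\} \le n/q'$.

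The independence hypothesis on $V_1, \ldots, V_m$ then gives independence of the events $\{p(V_j)=0\}$ across $j$, so
\begin{align}
\Pr\{yU_1 = 0\} = \prod_{j=1}^{m} \Pr\{p(V_j)=0\} \le \Big(\frac{n}{q'}\Big)^m,
\end{align}
which is the claimed inequality.

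I do not anticipate any substantive obstacle: the argument is essentially a one-line reduction to the fact that a nonzero univariate polynomial of degree $\le n$ over a field has at most $n$ roots. The only minor point worth checking is that $p(t)$ carries no constant term (the $i=0$ row is absent from $U_1$), but since the degree bound is $n$ rather than $n-1$, the root count and hence the probability bound are unaffected, and the hypothesis $y \neq 0$ still forces $p$ itself to be a nonzero polynomial.
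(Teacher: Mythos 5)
Your proof is correct: reducing the event to $yU_1=0$ with $y=x-x'\neq 0$, recognizing each coordinate as $p(V_j)$ for the single nonzero polynomial $p(t)=\sum_{i=1}^n y_i t^i$ of degree at most $n$, bounding $\Pr\{p(V_j)=0\}\le n/q'$ by the root count, and multiplying over $j$ by independence is exactly the standard argument for this lemma. The paper itself states Lemma~\ref{LJa} without proof, citing \cite{JLKHKM} and \cite{Jaggi2008}, where essentially this same polynomial-roots-plus-independence argument appears, so your approach coincides with the intended one.
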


\begin{proofof}{Proposition \ref{T2C}}\par
\noindent{\bf Step (1): Non-local code construction}\par\noindent
First, we provide our non-local code when we use the channel $n$ times based on the finite field $\FF_{q'}$.
Our message is given as an $(m_{0}-m_{1}) \times n$ matrix $M$, which satisfies condition
\eqref{eqH4-27C} asymptotically. 
Since the rank of $H_B$ is $m_1$,
there exist an 
$m_4\times m_1$ matrix $\hat{H}_B$ and 
an $m_1\times n$ matrix $\hat{Z}^{n}$ such that
\begin{align}
Y_B^{n}
=K_B U_0  X^{n}+H_B Z^{n}
=K_B U_0  X^{n}+\hat{H}_B \hat{Z}^{n}.\Label{TT}
\end{align}
Then, we address $\hat{H}_B $ and $\hat{Z}^{n}$
instead of ${H}_B $ and ${Z}^{n}$.

We fix an integer $m:=m_0+1$.
We independently choose $m$ random variables 
$V_1, \ldots, V_{m}$ subject to the uniform distribution on $\FF_{q'} $.
Also, we randomly choose 
the $m_3 \times m_3$ matrix $U_0$ among all $m_3 \times m_3$ invertible matrices. 

Then, we define the $n \times m$ matrix $U_1$
as $U_{1;i,j}:= (V_j)^i$ with $i=1, \ldots, n$ and $j= 1,\ldots, m$.
We also define the $(m_{0}-m_{1})\times m$ matrix $U_2:= M U_1$.
%where the $(m_{0}-m_{1}) \times n$ matrix $M$ is defined as $M_1=[I_{m_{0}-m_{1}} M].$
Moreover, we define the $m_3 \times n $ matrix $X^{n}:=
\left[
\begin{array}{c}
M \\
0
\end{array}
\right]$, where $0$ is the $m_1 \times n$ zero matrix.
As secret information with a negligible rate,
Alice sends Bob the information $V_1, \ldots, V_{m},U_2$.
Then, Alice inputs the $m_3 \times n $ matrix $U_0 X^{n}$ 
as the input of $n$ times use of the channel.

Then, Bob receives the $m_4 \times n$ matrix $Y_B^{n}$ given in \eqref{TT}
as well as the secret information $V_1, \ldots, V_{m},U_2$.
Since the ranks of $K_B U_0 X^{n}$ and $\hat{H}_B$ are $m_{0}-m_{1}$ and $m_1$ at most, respectively,
the rank of the matrix $Y_B^{n}$ is $m_0$ at most. 
We denote the rank by $\bar{m}_0$.
We choose $\bar{m}_0$ linearly independent row vectors from the row vectors of $Y_B^{n}$.
We denote the $\bar{m}_0 \times n$ matrix composed of the $\bar{m}_0 $ 
independent row vectors by $\bar{Y}_B^{n} $.
Similarly, we denote the matrices composed of these $\bar{m}_0$ row vectors of 
the matrices $K_B$ and $\hat{H}_B$ by
$\bar{K}_B$ and $\bar{H}_B$, respectively.
Then, using the standard Gaussian elimination,
Bob finds a matrix $ U_3$ to satisfy the equation
\begin{align}
U_3 \bar{Y}_B^{n} U_1= U_2,
\Label{4-27Y}
\end{align}
which is equivalent to
\begin{align}
U_3 
 \left[
\bar{K}_BU_0 P_{m_0-m_1} ~ \bar{H}_B 
\right]
 \left[
\begin{array}{c}
M \\
\hat{Z}^{n}
\end{array}
\right] U_1=M U_1,\Label{EEQ1}
\end{align}
where 
$P_{m_0-m_1}$ is the imbedding
$\left[
\begin{array}{c}
I \\
0\end{array}
\right]
 $
from $\FF_{q'}^{m_0-m_1} $ to 
$\FF_{q'}^{m_3} $.
Notice that Bob can calculate 
$U_1$ from the secret information $V_1, \ldots, V_{m}$.
Finally, Bob recovers the information $\hat{M}:=U_3 \bar{Y}_B^{n}$.
To check the condition \eqref{4-27Y}, Bob needs only 
$\bar{Y}_B^{n}$,
$U_2$, and $ U_1$, which can be computed from 
$V_1, \ldots, V_{m}$.

\noindent{\bf Step (2): Analysis of performance}\par\noindent
There are two conditions if the above protocol is to work well.
\begin{description}
\item[(F1)]
The relations 
$\im (\bar{K}_B U_0 P_{m_0-m_1})\cap \im \bar{H}_B= \{0\}$
and $\Ker \bar{K}_B U_0 P_{m_0-m_1}|_{\im M} =\{0\}$ hold.
\item[(F2)]
The relation 
$
\rank  
 \left[
\begin{array}{c}
M \\
\hat{Z}^{n}
\end{array}
\right]
 U_1
=\rank 
\left[
\begin{array}{c}
M \\
\hat{Z}^{n}
\end{array}
\right]
% (=m_0) 
$ holds,
where $\rank $ denotes the rank of the matrix.

%\item[(E3)]
%The relation $\rank H_B Z^n= \rank H_B Z^n U_1$ holds,

\end{description}

Assume that conditions (F1) and (F2) hold.
We apply Lemma \ref{L4-27} to the case when
$a_0= \bar{m}_0,
a_1=\rank M,
a_2= \rank \hat{Z}^{n},
W_1=\im M,
W_2= \im \hat{Z}^{n},
A_1=\bar{K}_B U_0 P_{m_0-m_1},
A_2=\bar{H}_B,
A_3= U_3$.
%The first condition of (F1)
%Since the choice of $\bar{m}_0$ guarantees condition \eqref{ERT},
Then, conditions (F1) and (F2) guarantee
conditions (E1) and (E2), respectively.
Then, due to condition (E3),
there exists a matrix $ U_3$ that satisfies equation \eqref{4-27Y}, i.e., \eqref{EEQ1}.
Condition (E4) guarantees that
$U_3 \bar{Y}_B^{n}= M$, i.e., 
%Using the matrix $ U_3$, 
Bob can decode the message $M$.

Now, we evaluate the probability that condition (F2) holds.
Condition (F2) holds if and only if 
$ z^T  \left[
\begin{array}{c}
M \\
\hat{Z}^{n}
\end{array}
\right]
 U_1 \neq 0$ for any $z \in \FF_{q'}^{m_0}$
 satisfying the condition $
 z^T  \left[
\begin{array}{c}
M \\
\hat{Z}^{n}
\end{array}
\right]
\neq 0 $.
Applying Lemma \ref{LJa} to all of $z (\neq 0)\in \FF_{q'}^{m_0}$,
we find that condition (F2) holds at least with probability 
$1- {q'}^{m_0} (\frac{n}{q'})^m
=1- \frac{{n}^m}{{q'}^{m-m_0}}=1- \frac{{n}^{m_0+1}}{{q'}} \to 1$.

Finally, we evaluate the probability that condition (F1) holds.
As shown later, the following conditions (F1') and (F1'') imply condition (F1).
\begin{description}
\item[(F1')]
The relation $\im (K_B U_0 P_{m_0-m_1})\cap \im \hat{H}_B= \{0\}$ holds.
\item[(F1'')]
The relation $\Ker {K}_B U_0 P_{m_0-m_1}|_{\im M} =\{0\}$ holds.
\end{description}
Hence, we show that conditions (F1') and (F1'') hold with a probability close to $1$.
Condition $\im (K_B U_0 P_{m_0-m_1})\cap \im \hat{H}_B= \{0\}$ holds if and only if
$ \im U_0 P_{m_0-m_1}\cap K_B^{-1} (\im \hat{H}_B)= \{0\}$.
For a fixed $K_B,\hat{H}_B$, 
since $\dim K_B^{-1} (\im \hat{H}_B) \le m_3-m_0+m_1$,
the probability of 
condition $\im (K_B U_0 P_{m_0-m_1})\cap \im \hat{H}_B= \{0\}$ is at least
\begin{align}
& (1-{q'}^{m_3-m_0+m_1-m_3})
(1-{q'}^{m_3-m_0+m_1-m_3+1})\cdots \nonumber \\
&\cdot (1-{q'}^{m_3-m_0+m_1-m_3+ m_0-m_1-1}) \nonumber \\
=&(1-{q'}^{m_1-m_0})
(1-{q'}^{m_1-m_0+1})\cdots
(1-{q'}^{-1})
\nonumber \\
=& 1-O(1/q').
\end{align}

The relation $\Ker {K}_B U_0 P_{m_0-m_1}|_{\im M} =\{0\}$ holds if and only if
no basis of $\im U_0 P_{m_0-m_1} M$ belongs to the space $\Ker {K}_B $.
Since $\Ker {K}_B $ is an $m_3-m_0$-dimensional subspace of an $m_3$-dimensional space,
the probability of condition 
$\Ker {K}_B U_0 P_{m_0-m_1}|_{\im M} =\{0\}$ is
\begin{align}
&(1-{q'}^{-m_0})
(1-{q'}^{-m_0+1})\cdots
(1-{q'}^{-m_0+m_7-1})\nonumber \\
=&1-O({q'}^{-m_0+m_7-1}),
\end{align}
where $m_7:= \dim \im P_{m_0-m_1} M(= \dim \im U_0  P_{m_0-m_1} M) 
=\dim \im M= \rank M \le m_0-m_1$.
Therefore, since $q'$ is sufficiently large,
we obtain the desired statement.

Finally, we show that condition (F1') implies condition (F1).
Since the relation
$\im (\bar{K}_B U_0 P_{m_0-m_1})\cap \im \bar{H}_B= \{0\}$
can be shown easily from (F1'), we show only the relation
$\Ker \bar{K}_B U_0 P_{m_0-m_1}|_{\im M} =\{0\}$ from (F1') and (F1'').
The choice of $\bar{m}_0$ guarantees that 
there exists an invertible map $U_4$ from $\im [K_B U~ \hat{H}_B]$ to $\FF_{q'}^{\bar{m}_0}$ such that
$U_4 [K_B U~ \hat{H}_B]=[\bar{K}_B U~ \bar{H}_B]$. 
Thus, 
\begin{align}
&\Ker \bar{K}_B U_0 P_{m_0-m_1}|_{\im M} =
\Ker U_4^{-1} \bar{K}_B U_0 P_{m_0-m_1}|_{\im M} 
\nonumber \\
=&
\Ker {K}_B U_0 P_{m_0-m_1}|_{\im M} =\{0\}.
\end{align}
\end{proofof}

\begin{remark}
Our proof is different from the proof presented in \cite[Section VII]{JLKHKM}\cite[Section VI]{Jaggi2008}\cite[Section IV-C]{Yao2014}.
%First, they did not introduce the random matrix $U_0$. In their proof, $U_0$ is fixed to the identity matrix.
%Hence, they cannot guarantee condition (F1) with $U_0=I$ for an arbitrary $K_B,H_B$ with high probability.
%That is, it is possible for Eve to choose the matrix $H_B$ to break condition (F1) with $U_0=I$.
%To satisfy condition \eqref{eqH4-27}, $U_0$ must be randomly chosen.
%Second, 
They suggested that $ (m_0-m_1)m_0+1$ be chosen as $m$ 
because they employ the concept of list decoding.
However, our discussion allows us to choose a much smaller value $m_0+1$
as $m$.
This fact shows that our evaluation is better than their evaluation in this sense.
Note that our evaluation does not use list decoding.
\end{remark}

%Theorem \ref{T3} 
Proposition \ref{T2B} requires a finite field $\FF_{q'}$ with an infinitely large $q'$.
The paper \cite[Appendix D]{Hayashi-Tsurumaru} discussed the construction of $\FF_{2^t}$ whose
multiplication and inverse multiplication have calculation complexity $O(t \log t)$\footnote{The multiplication of elements $v$ and $z$ of $\FF_{2^t}$ is essentially given in (124) of \cite{Hayashi-Tsurumaru}
by using Fourier transform via a calculation on circulant matrices.
For the inverse multiplication of an element $v$ of $\FF_{2^t}$, 
we calculate $ F^{-1}[-Fv. *Fz]$ instead of 
$ F^{-1}[Fv. *Fz]$ in (124), where $F$ is discrete Fourier transform.}.

\end{document}